\let\ds=\displaystyle
\def\R{\mathbb{R}}
\def\ol{\overline}
\def\<{\langle}
\def\>{\rangle}
\def\v{V}
\def\v{V}
\def\T{\bm{T}}
\def\N{\bm{N}}
\def\B{\bm{B}}
\def\nxi{\bm{\xi}}
\def\U{\bm{U}}
\def\V{\bm{V}}
\def\R{\bm{R}}
\def\D{D}
\def\Partial{\partial}
\def\k{\bm{k}}
\def\ntau{\bm{\tau}}
\def\p{\bm{p}}
\def\f{\bm{f}}
\def\h{\bm{h}}
\def\g{\bm{g}}
\newcommand{\displayskip}[1][1]{%
 \abovedisplayskip=#1\abovedisplayskip
 \belowdisplayskip=#1\belowdisplayskip
 \abovedisplayshortskip=#1\abovedisplayshortskip
 \belowdisplayshortskip=#1\belowdisplayshortskip
 }
\newtheorem{theorem}{Theorem}
\newtheorem{definition}[theorem]{Definition}
\newtheorem{lemma}[theorem]{Lemma}
\newtheorem{proposition}[theorem]{Proposition}
\newtheorem{remark}[theorem]{Remark}
\newenvironment{proof}{\ignorespaces\par\noindent\textbf{Proof.}\quad}{\hfill$\Box$\par}
\def\X{\mathfrak{X}}
\begin{document}
%\title{Curve evolution of pseudo null curves}
\title{Integrability aspects of the vortex filament equation for pseudo-null curves}
\author{José del Amor$^{a}$, Ángel Giménez$^{b}$ and Pascual Lucas$^{a}$ \\
$^{a}${\small Departamento de Matemáticas, Universidad de Murcia}\\
{\small Campus de Espinardo, 30100 Murcia, Spain} \\
$^{b}${\small Centro de Investigación Operativa, Universidad Miguel Hernández de Elche}\\
{\small Avda. Universidad s/n, 03202 Elche (Alicante), Spain}\\
}
\date{}

\maketitle

\begin{abstract}
An algebraic background in order to study the integrability properties of pseudo-null curve motions in a $3$-dimensional Lorentzian space form is developed. As an application we delve into the relationship between the Burgers' equation and the pseudo-null vortex filament equation. A recursion operator for the pseudo-null vortex filament equation is also provided.
\end{abstract}

\noindent\textit{Keywords:} Curve evolution; Pseudo-null curves; Lie algebra; Integrability; Burgers' equation.

\noindent\textit{2010 MSC:} 14H70; 17B80; 35Q53; 37K10.

\section{Introduction}
%Invariants associated with curves evolving in a certain geometric setting very often also evolving according to well known integrable system. 
Curves motions in a certain geometric setting in turn induce evolution equations for their invariants which are very often well-known integrable systems. There is a very large body of literature on the relationships between curve motions and integrable systems. It is interesting to stress that having a scalar evolution equation coming from a curve motion helps to elucidate many aspects of its integrability from the intrinsic geometry of the involved curves, and vice versa (see \cite{langer_poisson_1991,beffa_integrable_2002,sanders_integrable_2003, mansfield_evolution_2006,del_amor_hamiltonian_2014,del_amor_lie_2015,gurbuz_moving_2015}). It has been shown in \cite{chou_integrable_2003,chou_motions_2004} that Burgers' hierarchy naturally arises from inextensible motions of plane curves in similarity geometry.  More recently in \cite{grbovic_backlund_2016}, the authors derived the vortex filament equation for pseudo-null curves in the Minkowski space, and proved that the evolution equation for its invariant (pseudo-torsion) is the Burgers' equation. 
Let us remind that Burgers' equation is the nonlinear parabolic equation given by
\begin{equation}\label{eq-burger}
	u_{t}=u_{xx}+u u_{x}, \quad x\in \mathbb{R}, t>0,
\end{equation}
introduced by Burgers' \cite{burgers_mathematical_1948} as a one-dimensional very simplified model of the Navier-Stokes equations. It is often used as a toy model to understand general models in viscous fluid flows and it appears in numerous physical problems.
%including sound wawes in a viscous medium, waves in fluid-filled viscous elastic tubes, and magnetohydrodynamic waves in a medium with finite electrical conductivity. 
Burgers' equation is the simplest partial differential equation combining both nonlinear propagation effects and dissipative effects. 
Among many other properties, equation \eqref{eq-burger} possesses the well-known recursion operators
\begin{equation}\label{burger-recursion}
	\begin{aligned}
		\mathcal{R}_{1}&=D_{x}+\frac{1}{2}u+\frac{1}{2}u_{x}D_{x}^{-1}; \\
		\mathcal{R}_{2}&=tD_{x}+\frac{1}{2}(t u+x)+\frac{1}{2}(t u_{x}+1)D_{x}^{-1}. 
	\end{aligned}
\end{equation} 
On the other hand, pseudo-null curve is a singular type of curve which requires to be clarified. In a Lorentzian space there are timelike curves, spacelike curves and null curves according to the causal character of its velocity vector. A pseudo-null curve is a spacelike curve with the constraint that its acceleration vector is null. The mathematical relevance of pseudo-null curves is undoubted, since they arise in a natural way in Lorentzian spaces. New findings about flows of this kind of curves in the present study provide evidences about its physical relevance as well.   

The working outline and the methods used in this paper closely follow the approach adopted in \cite{del_amor_hamiltonian_2014,del_amor_lie_2015,del_amor_null_2016}. Those papers provide significant progress in the search for relationships between geometric motions of curves and integrable systems by means of introducing a suitable Lie bracket on the phase space of the curve evolution. Nevertheless, each one of the different types of curves has its own peculiarities, making it very difficult to present a general treatment covering all the cases at the same time even when you are working in the same ambient space. This paper is intended to adapt this approach to evolutions of pseudo-null curves. The paper is organized as follows. In the next section we briefly highlight some important aspects and tools of integrability. Section 3 concerns the geometry of pseudo-null curves, in particular, we introduce a couple of pseudo-orthonormal frames regarding which the corresponding equations and invariants (curvatures) are defined. 
In Section 4 we investigate pseudo-null curve variations and state the relevant formulas in this setting. 
To search symmetries of pseudo-null curve evolution equations a Lie bracket on a suitable phase space, from a geometrical point of view, is introduced in Section 5. Then we move on to Section 6 in which we apply all elements borrowed from the previous sections to study the integrability aspects of the pseudo-null vortex filament equation. Finally, in Section 7 we give some concluding remarks.

%\begin{itemize}
	%\item Offer a wealth of information 
	%\item La principal diferencia con aquellos artículos es que el sistema que nos sale no es Hamiltoniano, sino disipativo.
	%\item .. was found to have a strong influence on the ...
%\end{itemize}

\section{Preliminaries}\label{preliminares}
In this section we summarize some necessary notions and basic definitions from differential calculus which are relevant to the rest of the paper (see \cite{dorfman_dirac_1993,dickey_soliton_2003,blaszak_multi-hamiltonian_2012} for a very complete treatment of the subject).  
Consider $u$ a differentiable function in the real variable $x$, and set
\begin{equation*}
	u^{(m)}=\frac{d^mu}{dx^m},\quad \text{for } m\in \mathbb{N}.
\end{equation*}
Let $\mathcal{P}$ be the algebra of \emph{polynomials} in $u$ and its derivatives of arbitrary order, namely, 
\begin{equation*}
	\mathcal{P}=\mathbb{R}[u^{(m)}:m\in \mathbb{N}].
\end{equation*}
We refer to the elements of $\mathcal{P}$ whose constant term vanishes as $\mathcal{P}_0$. Acting on the algebra $\mathcal{P}$ is defined a derivation $\partial$ obeying
\begin{equation*}
	\begin{cases} 
		\partial(f g)=(\partial f)g+f(\partial g), \\
		\partial(u^{(m)})=u^{(m+1)},
	\end{cases}
\end{equation*}
thereby becoming a differential algebra. 
\begin{remark}
	In a more general setting we can consider $\mathcal{P}$, for example, to be the algebra of \emph{local functions}, i.e. $\mathcal{P}=\bigcup_{j=1}^{\infty}\mathcal{P}_j$, where $\mathcal{P}_j$ is the algebra of locally analytic functions of $u$ and its derivatives up to order $j$ (see \cite{mikhailov_symmetry_1987,sokolov_symmetries_1988,mikhailov_symmetry_1991}). All the results and formulas established in sections 4 and 5 involving the algebra $\mathcal{P}$ remain valid if the differential algebra of polynomials is replaced by the differential algebra of local functions. Nonetheless, the differential algebra of polynomials is sufficient for our purposes.
\end{remark}
It is customary to take $\partial$ as the total derivative $D_x$ which can be viewed as
\begin{equation*}
	D_x=\sum_{m\in \mathbb{N}}u^{(m+1)} \frac{\partial }{\partial u^{(m)}}.
\end{equation*}
In addition to $\partial$, other derivations $\xi$ may also be considered. The action of $\xi$ is determined if we know how $\xi$  acts on the generators of the algebra. Indeed, set $a_{m}=\xi u^{(m)},$ then, for any $f\in \mathcal{P}$ we have
$$\xi f=\sum_{m\in\mathbb{N}} a_{m} \frac{\partial f}{\partial u^{(m)}}.$$
The space of all derivations on $\mathcal{P}$, denoted by $\mathop{\rm der}\nolimits(\mathcal{P})$, is a Lie algebra with respect to the usual commutator
\begin{equation*}
	[\partial_1,\partial_2]=\partial_1\partial_2-\partial_2\partial_1,\quad \partial_1,\partial_2\in \mathop{\rm der}\nolimits(\mathcal{P}).
\end{equation*}
Derivations commuting with the total derivative have important properties. Among others, if $[\xi,\partial]=0$, we have
\begin{equation*}
	a_{m+1}=\xi u^{(m+1)}=\xi\partial u^{(m)}=\partial\xi u^{(m)}=\partial a_{m}.
\end{equation*}
Thus 
\begin{equation*}
	\xi f=\sum_{m\in\mathbb{N}} a^{(m)} \frac{\partial f}{\partial u^{(m)}}.
\end{equation*}
For each element $a$ of $\mathcal{P}$ we shall write
\begin{equation}\label{derivation}
	\partial_a=\sum_{m\in\mathbb{N}} a^{(m)}\frac{\partial }{\partial u^{(m)}}.
\end{equation}
The set of derivations $\partial_a$ is a Lie subalgebra of $\mathop{\rm der}\nolimits(\mathcal{P})$, and it induces a Lie algebra on the space $\mathcal{P}$. Indeed, a direct computation shows that $\partial_a$ are derivations in $\mathcal{P}$ verifying $[\partial_a,\partial_b]=\partial_{[a,b]}$, 
where $[a,b]=\partial_ab-\partial_ba$, i.e.
\begin{equation}\label{scalar-bracket}
	[a,b]=\sum_{m\in\mathbb{N}}\left(a^{(m)}\frac{\partial b}{\partial u^{(m)}}-b^{(m)}\frac{\partial a}{\partial u^{(m)}}\right).
\end{equation}
This latter commutator can also be expressed  with the aid of Fréchet derivatives as $[a,b]=b'[a]-a'[b]$, where
\begin{equation}\label{frechet-derivative}
		a'[b]=\left.\frac{d}{dt}\right|_{t=0}\partial_{a}(u+tb).
\end{equation}
We will refer to $\partial_{a}$ as an evolution derivation (or a vector field, provided that no confusion is possible), and the algebra of all evolution derivations will be denoted by  $\mathop{\rm der}\nolimits^*(\mathcal{P})$. Observe that, in particular, if we take $a=u'$ then $\partial=\partial_{u'}$, being $u'=D_x u$.

Consider an evolution equation of the form
\begin{equation}\label{evolution-equation}
	\frac{\partial u}{\partial t}=f(u,u^{(1)},u^{(2)},\ldots)
\end{equation}
where $f$ is an element of $\mathcal{P}$. An element $\sigma\in \mathcal{P}$ is called a symmetry of the evolution equation \eqref{evolution-equation} if and only if  $[f,\sigma]=f'[\sigma]-\sigma'[f]=0$.
Symmetries of integrable equations can often be generated by recursion operators which are linear operators mapping a symmetry to a new symmetry. A linear differential operator $\mathcal{R}:\mathcal{P}\rightarrow \mathcal{P}$ is a recursion operator for the evolution equation \eqref{evolution-equation} if it is invariant under $f$, i.e., $L_f \mathcal{R}=0$, where $L_f$ is the Lie derivative acting as $L_{f}a=[f,a]$ for all $a\in \mathcal{P}$. $\mathcal{R}$ is said to be hereditary if for an arbitrary $f\in \mathcal{P}$ the relation $L_{\mathcal{R}f}\mathcal{R}=\mathcal{R}L_f \mathcal{R}$ is verified.

\section{Geometry of pseudo-null curves}
A Lorentzian manifold $(M^n_1,g)$ is an $n$-dimensional differentiable manifold $M^n_1$ endowed with a non-degenerate metric tensor $g$ with signature $(n-1,1)$. The metric tensor $g$ will be also denoted by $\left\langle\cdot , \cdot\right\rangle$ and the Levi-Civita connection by $\nabla$. The sectional curvature of a non-degenerate plane generated by $\left\{u,v\right\}$ is $$K(u,v)=\frac{\left\langle R(u,v)u,v \right\rangle}{\left\langle u,u \right\rangle \left\langle v,v \right\rangle-\left\langle u,v \right\rangle^2},$$
where $R$ is the Lorentzian curvature tensor given by
\begin{equation}\label{eq-curvature-tensor}
	R(X,Y)Z=-\nabla_X\nabla_YZ+\nabla_Y\nabla_XZ+\nabla_{[X,Y]}Z.
\end{equation}
Lorentzian manifolds with constant sectional curvature are called Lorenzian space forms. It is a well-known fact that the curvature tensor $R$ adopts a simple formula in these manifolds:
\begin{equation}\label{eq-constant-curvature}
	R(X,Y)Z=G \left\{\left\langle Z,X \right\rangle Y- \left\langle Z,Y \right\rangle X\right\},
\end{equation}
where $G$ is the constant sectional curvature. 
%When the curvature $G$ vanishes, then $M^n_1$ is called the Minkowski space and will be denoted by $\mathbb{R}^{n}_{1}$.
A $3$-dimensional Lorentzian space form  $M^3_1(G)$ can be realized as $\mathbb{R}^3_1$ (when $G=0$) and as a hyperquadric in $\mathbb{R}^4_1$ or $\mathbb{R}^4_2$ (when $G>0$ or $G<0$, respectively).  
%Let be a $3$-dimensional Lorentzian space form with sectional cuvature $G$ and Levi-Civita connection $\nabla$.
Consider $\gamma:I\to M^3_1(G)$, $\gamma\equiv\gamma(s)$ an arc-length parametrized spacelike curve such that $\nabla_{T}T$ is a null vector, where $T(s)=\gamma'(s)$, and $\nabla_{T}$ stands for the covariant derivative along $\gamma$. The Frenet equations for $\gamma$ are
\begin{equation}\label{eq-Frenet}
	\begin{aligned}
		\nabla_TT&=N,\\
		\nabla_TN&=\tau N,\\
		\nabla_TB&=T-\tau B,
	\end{aligned}
\end{equation}
where $\tau$ will be referred as the \textit{pseudo-torsion} of $\gamma$. The frame $\{T,N,B\}$ will be called the \textit{Frenet frame}, and it is a pseudo-orthonormal frame verifying:
\begin{align*}
\<T,T\>&=1, \<T,N\>=\<T,B\>=0,\\
\<N,B\>&=-1, \<N,N\>=\<B,B\>=0.
\end{align*}
We refer to this type of curves as \emph{pseudo-null curves} (see \cite{walrave,rafael_lopez}). Besides, if $\tau: I \rightarrow \mathbb{R}$ is a smooth function, then there exists an unique pseudo-null curve (up to an action of an isometry in $M^3_1(G)$) with pseudo-torsion $\tau$. 	The Gauss equation for $M^3_1(G)$ establishes
\begin{align*}
X' &= \nabla_TX-G\<T,X\>\gamma,%\\[2mm]
	%\frac{\partial X}{\partial t} &= \nabla_VX-G\<V,X\>\gamma,
\end{align*}
where $X'\equiv X'(s)$ stands for the usual derivative in $ \mathbb{R}^3_1, \mathbb{R}^4_1$ or $ \mathbb{R}^4_2$ according to the sign of the sectional curvature $G$.

Let $\gamma$ be a pseudo-null curve in $M^3_1(G)$ with Frenet frame $\left\{T,N,B\right\}$. We fix a constant $c>0$ and consider the vector field along $\gamma$ given by  
\begin{equation*}
	\xi(s)=\lambda(s)N(s), \quad \lambda(s_0)=\frac{1}{c}.
\end{equation*}
Derivating $\xi(s)$ with respect to the parameter $s$ we obtain:
\begin{equation}
	\begin{aligned}
		\xi'(s)&=\lambda'(s)N(s)+\lambda(s)N'(s)\\ 
		&=\lambda'(s)N(s)+\lambda(s)\left(\nabla_{T(s)}N(s)-G \left\langle T,N \right\rangle\gamma(s)\right)\\
		&=(\lambda'(s)+\lambda(s)\tau(s))N(s).
	\end{aligned}
\end{equation}
It follows that $\xi$ is a parallel vector field along of $\gamma$ (i.e. $\xi'(s)=0$) if and only if $\lambda(s)$ verifies the differential equation $\lambda'(s)+\lambda(s)\tau(s)=0$. For $s_0\in I$, the solution of this equation with initial condition $\lambda(s_0)=\frac{1}{c}$ is given by
$$\lambda(s)=\frac{1}{c} e^{-\int_{s_0}^{s} \tau(p)\;dp}.$$
In short, we can construct a parallel null vector field $\xi_0$ along $\gamma$ orthogonal to $\gamma'=T$, and such that the pseudo-null curve $\gamma$ is contained in the null hyperplane orthogonal to $\xi_0$. More precisely, the curve $\gamma$ is contained in the lightlike surface 
\begin{equation*}
	S= \left\{ x\in M^3_1(G): \left\langle x,\xi_0 \right\rangle = r \right\},		
\end{equation*}
where $r$ is a constant. In fact, the lightlike surface $S$ can be locally parametrized by
$$X(s,t)=\gamma(s)+t \xi_0,$$
and thus $S$ is a lightlike cylinder whose rulings are null geodesic lines in the direction of $\xi_0$.
Accordingly we state the following result.
\begin{proposition}
	Let $\gamma:I \rightarrow M^3_1(G)$ be an arc-length parametrized spacelike curve. Then $\gamma$ is a pseudo-null curve if and only if $\gamma$ is congruent to a non-geodesic spacelike curve in a lightlike cylindrical surface.  
\end{proposition}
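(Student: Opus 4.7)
The forward implication is essentially assembled in the discussion preceding the statement. Starting from a pseudo-null curve $\gamma$, I would solve $\lambda' + \lambda\tau = 0$ to obtain the parallel null vector field $\xi_0 = \lambda N$ along $\gamma$, which satisfies $\langle T,\xi_0\rangle = 0$. Differentiating $\langle \gamma,\xi_0\rangle$ and using $\xi_0' = 0$ gives a constant $r$, so $\gamma$ sits in the level set $S = \{x \in M^3_1(G): \langle x,\xi_0\rangle = r\}$. To confirm that $S$ is a lightlike cylindrical surface in the genuine sense, I would extend $\xi_0$ off $\gamma$ by parallel transport along the null geodesics in its own direction; these geodesics stay in $S$ (the derivative of $\langle x,\xi_0\rangle$ along them vanishes), producing the ruled parametrization $X(s,t) = \gamma(s) + t\xi_0$ (interpreted as exponentiating $\xi_0$ when $G \neq 0$) whose rulings are null geodesics. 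That $\gamma$ is non-geodesic is immediate from $\nabla_T T = N \neq 0$.

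For the converse, assume $\gamma$ is a non-geodesic, arc-length parametrized spacelike curve contained in a lightlike cylindrical surface $S$ whose ruling direction is a parallel null vector field $\xi_0$. Since the ruling through each point is a null geodesic and the tangent plane $T_pS$ is a degenerate $2$-plane containing $\xi_0$, we have $T_pS = \xi_0^\perp$, and therefore $\langle T,\xi_0\rangle = 0$ along $\gamma$. Parallelism of $\xi_0$ along $\gamma$ then yields
\begin{equation*}
0 = \frac{d}{ds}\langle T,\xi_0\rangle = \langle \nabla_T T,\xi_0\rangle,
\end{equation*}
while $\langle T,T\rangle = 1$ gives $\langle \nabla_T T,T\rangle = 0$. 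Thus $\nabla_T T$ lies in $T^\perp \cap \xi_0^\perp$. Since $T$ is spacelike and $\xi_0$ is null they are linearly independent, so in the $3$-dimensional ambient this intersection is the $1$-dimensional line $\mathrm{span}\{\xi_0\}$. Hence $\nabla_T T = \mu\,\xi_0$ for some scalar function $\mu$, which is nonzero by the non-geodesic hypothesis. Therefore $\nabla_T T$ is a nonzero null vector, i.e.\ $\gamma$ is pseudo-null.

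The computations themselves are short; the step that needs most care is isolating the right notion of \emph{lightlike cylindrical surface} in $M^3_1(G)$ when $G\neq 0$, where the rulings are null geodesics rather than affine lines. The key is that such a surface comes equipped with a (locally defined) parallel null vector field $\xi_0$ tangent to its rulings, and both directions of the proposition hinge on the identification $\xi_0 = \lambda N$ (up to a nonvanishing scalar), which couples the Frenet datum of $\gamma$ to the intrinsic geometry of the cylinder.
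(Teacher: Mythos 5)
Your argument is correct and, for the forward implication, is exactly the construction the paper carries out in the paragraphs preceding the statement (the parallel field $\xi_0=\lambda N$, the level set $S=\{x:\langle x,\xi_0\rangle=r\}$, and the ruled parametrization $X(s,t)=\gamma(s)+t\xi_0$); the paper then states the proposition without a separate proof. Your converse — using that the degenerate tangent plane of the cylinder equals $\xi_0^\perp$ to force $\nabla_TT\in T^\perp\cap\xi_0^\perp=\mathrm{span}\{\xi_0\}$ — is a correct and worthwhile supplement to the direction the paper leaves implicit.
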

Some results about null ruled surfaces in $3$ and $4$-dimensional Minkowski spaces can be found in \cite{kilic_ligthlike_2006,ozturk_international_2016}.  
\begin{remark}\label{remark-string}
A null string is defined as a lightlike surface whose generators are geodesics (see \cite{schild_classical_1977,perlick_totally_2005}). Consequently, lightlike cylindrical surfaces are particular cases of null strings in which the generators have a fixed direction given by the null ruling.  
\end{remark}

It is clear that the normal vector field $N(s)$ is always in the null fixed direction orthogonal to the hyperplane in which the curve is contained. Going somewhat further, if we take
\begin{equation}\label{eq-relation1}
	k(s)=\frac{1}{\lambda(s)}=c e^{\int_{s_0}^{s} \tau(p)\;dp}\quad \Leftrightarrow\quad k'(s)=\tau(s)k(s), \quad k(s_0)=c,
\end{equation}
and
\begin{equation}\label{eq-relation2}
	\xi(s)= \frac{1}{k(s)}N(s), \qquad \eta(s)= k(s)B(s),
\end{equation}
then it is verified that
%If now we take $\eta(s)=k(s)B(s)$, then $\left\langle \xi,\eta \right\rangle=-1$ and 
%\begin{equation*}
	%\nabla_{T}\eta=k' B+k\nabla_{T}B=k' B+k(T-\tau B)=(k'-k\tau)B+k T=kT.
%\end{equation*}
%The Frenet equations by means of the pseudo-orthogonal frame $\left\{T,\xi,\eta\right\}$ become
\begin{equation}\label{parallel-frame}
	\begin{aligned}
		\nabla_{T}T&=k \xi, \\ \nabla_{T}\xi&=0, \\ \nabla_{T}\eta &= k T.
	\end{aligned}
\end{equation}
The pseudo-orthonormal frame $\left\{T,\xi,\eta\right\}$ will be called a \textit{parallel frame} of $\gamma$ and $k$ its \textit{pseudo-curvature}.
Observe that unlike the Frenet frame, the parallel frame is not unique and depends on a constant which merely determines the null directions scaling factor of the frame. In the next section the calculations will be developed attending to the Frenet frame and its pseudo-torsion, whilst it could be equally performed on the basis of a parallel frame.

\section{Pseudo-null curve variations}
Let $\gamma$ be a pseudo-null curve, for the sake of simplicity the letter $\gamma$ will also denote a variation of pseudo-null curves (pseudo-null curve variation) $\gamma=\gamma(p,t):I\times(-\varepsilon,\varepsilon)\rightarrow M^3_1(G)$ with $\gamma(p,0)$ the initial pseudo-null curve. Associated with such a variation is the variation vector field $\v(p)=\v(p,0)$, where $\v=\v(p,t)=\frac{\partial \gamma}{\partial t}(p,t)$. We denote by $v$ the velocity function verifying $\ol T(p,t)=v(p,t)T(p,t)$, where $\ol T(p,t)=\frac{\partial \gamma}{\partial p}(p,t)$, and by $\frac{D}{\partial t}$ the covariant derivative along the curves $\gamma_p(t)=\gamma(p,t)$. We will also write $\gamma(s,t), \tau(s,t), \v(s,t),$ etc., for the corresponding objects in the arc-length parametrization.

\begin{lemma}\label{lema-variation1}
	If $\gamma$ is a pseudo-null curve variation, then 
	\begin{enumerate}[(a)]\itemsep0pt
		\item $\left.\frac{\partial v}{\partial t}\right|_{t=0} = \rho_V v, \qquad \rho_V=\left\langle \nabla_T \v,T  \right\rangle$, 
		\item $[V,T] = -\rho_V T$,
		\item $\left\langle \nabla_T^{2} \v+G\v, N \right\rangle=0$,	 
	\end{enumerate}
	where $v$ is the velocity and $\v$ its associated variation vector field.
\end{lemma}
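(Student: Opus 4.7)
The plan rests on two structural facts about any two-parameter map $\gamma(p,t)$: the coordinate fields $\ol T = \partial\gamma/\partial p$ and $\v = \partial\gamma/\partial t$ commute, so torsion-freeness of $\nabla$ gives $\nabla_\v \ol T = \nabla_{\ol T}\v$; and for any vector field $X$ along $\gamma$ the curvature identity reads $\nabla_\v\nabla_{\ol T} X - \nabla_{\ol T}\nabla_\v X = R(\ol T,\v) X$, which in $M^3_1(G)$ simplifies through \eqref{eq-constant-curvature}.

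For (a), I substitute $\ol T = vT$ into $\nabla_\v\ol T = \nabla_{\ol T}\v$, expand the left side as $(\partial_t v)T + v\nabla_\v T$, and take the inner product with $T$. Since $\langle T,T\rangle = 1$ forces $\langle\nabla_\v T, T\rangle = 0$, this yields $\partial_t v = v\rho_\v$; at $t=0$, where $v=1$, this is exactly (a).

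For (b), the same expansion gives $\nabla_\v T = \nabla_T \v - (v_t/v)T$; torsion-freeness $[\v,T] = \nabla_\v T - \nabla_T\v$ combined with (a) then yields $[\v,T] = -\rho_\v T$ at $t=0$.

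For (c), I would differentiate the pseudo-null constraint $\langle\nabla_{T_t}T_t, \nabla_{T_t}T_t\rangle = 0$ (valid at every $t$ because each $\gamma_t$ is pseudo-null) in $t$ and evaluate at $t=0$. Writing $\nabla_{T_t}T_t = v^{-2}\nabla_{\ol T}\ol T - v^{-3}v_p\ol T$ and using $v|_0 = 1$, $v_p|_0 = 0$, the correction terms coming from the factors $v^{-2}$ and $v^{-3}v_p$ become multiples of $N$ or $T$ and so vanish when paired with $N$ (by $\langle N,N\rangle = 0$ and $\langle T,N\rangle = 0$); the derivative therefore reduces to $\langle\nabla_\v\nabla_{\ol T}\ol T|_{t=0},\, N\rangle = 0$. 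Applying the curvature identity with $X = \ol T$, together with $\nabla_\v\ol T = \nabla_{\ol T}\v$, gives $\nabla_\v\nabla_{\ol T}\ol T|_{t=0} = \nabla_T^2\v + R(T,\v)T$, and \eqref{eq-constant-curvature} supplies $R(T,\v)T = G(\v - \langle T,\v\rangle T)$, whose $N$-component is $G\langle\v,N\rangle$. Combining yields $\langle\nabla_T^2\v + G\v,\, N\rangle = 0$, which is (c).

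The main technical subtlety is the reduction in (c) from the pseudo-null constraint down to the clean identity involving $\nabla_\v\nabla_{\ol T}\ol T|_{t=0}$; the arc-length assumption at $t=0$ is what makes the bookkeeping collapse cleanly. Once that step is in place, the curvature identity together with \eqref{eq-constant-curvature} handle (c) with a single pairing against $N$.
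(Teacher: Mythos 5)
Your argument is correct and follows essentially the same route as the paper: (a) and (b) come from differentiating $\langle\ol T,\ol T\rangle=v^2$ using $[\v,\ol T]=0$, and (c) from differentiating the null-acceleration constraint in $t$ and commuting covariant derivatives via \eqref{eq-constant-curvature}. The only cosmetic difference is that in (c) you normalize $v|_{t=0}=1$, $v_p|_{t=0}=0$ (legitimate, since the initial curve is arc-length parametrized), whereas the paper differentiates the equivalent identity $\langle\nabla_{\ol T}\ol T,\nabla_{\ol T}\ol T\rangle=v^2T(v)^2$ and lets the extra terms cancel for general $v$.
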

\begin{proof}
	The velocity $v\equiv v(p,t)$ verifies $v^{2}=\left\langle \ol T, \ol T \right\rangle$ that, together with the property $[V,\ol T]=0$ leads to
	\begin{equation*}
		2 v V(v)= 2 \left\langle \nabla_{V}\ol T, \ol T \right\rangle=2 \left\langle \nabla_{\ol T} V, \ol T \right\rangle = 2 v^{2} \left\langle \nabla_{T} V, T \right\rangle.
	\end{equation*}
	As a consequence it follows (a) and (b). Moreover, 
	$$\<\nabla_{\ol T}{\ol T},\nabla_{\ol T}{\ol T}\>=v^2T(v)^2.$$
	To prove (c) we first take the derivative in the $V$ direction on the right side term of the latter equation  and use (a) and (b) to get
	\[
	V(v^2T(v)^2)=2v^2\rho_VT(v)^2+2v^3T(v)T(\rho_V).
	\]
	By doing the same on the left side term we obtain
	\[
	V(\<\nabla_{\ol T}{\ol T},\nabla_{\ol T}{\ol T}\>)=2v^4\<\nabla_T^2V,N\>+2Gv^4\<V,N\>+2v^3T(v)T(\rho_V) +2v^2T(v)^2\rho_V
	\]
	From equating and simplifying both expressions it follows (c).
	%\[
	%v^2\<\nabla_T^2V+GV,N\>=0.
	%\]
\end{proof}
It is therefore justified to introduce the following definition. 
\begin{definition}
	Let $\mathfrak{X}(\gamma)$ be the set of smooth vector fields along $\gamma$. We say that $\v\in\mathfrak{X}(\gamma)$  locally preserves the pseudo-null character if $\left\langle \nabla_T^{2} \v+G \v, N \right\rangle=0$. We also say that $\v$ locally preserves the arc-length parameter along $\gamma$ if $\rho_{\v}=\left\langle \nabla_{T}\v,T \right\rangle=0$.
\end{definition}
Let $\v=f_{\v} T+g_{\v} N + h_{\v} B$ be a generic vector field along a pseudo-null curve $\gamma$. From the Frenet equations we easily calculate the first and the second covariant derivative of $\v$,
\begin{equation*}
	\begin{aligned}
		\nabla_{T}\v &=  \left(f'_{\v}+h_{\v}\right)T + \left(f_{\v}+g'_{\v}+ \tau g_{\v} \right) N + \left(h'_{\v}- \tau h_{\v} \right) B, \\
		\nabla_{T}^{2}\v &=  \left(f''_{\v}+2 h'_{\v}- \tau h_{\v} \right) T+ \left(2 f'_{\v}+ \tau f_{\v} +g''_{\v}+2 \tau  g'_{\v} + \left(\tau '+\tau ^2\right) g_{\v} +h_{\v}\right) N \\ 
		&\quad +\left(h''_{\v}-2 \tau  h'_{\v}+(\tau ^2-\tau') h_{\v}\right) B.
	\end{aligned}
\end{equation*}
Note that if $\v$ locally preserves the pseudo-null character then the binormal component $h_{\v}$ is limited to be a solution of the second order homogeneous differential equation given by
\begin{equation}\label{eq-comp1}
	h_{\v}''-2 \tau  h_{\v}'+(\tau^2-\tau'+G) h_{\v}=0,
\end{equation}
whose coefficients are completely determined by the pseudo-torsion of the curve. If, moreover, $\v$ locally preserves the arc-length parameter then $f$ is calculated from $h$ by means of the equation     
\begin{equation}\label{eq-comp2}
	f'_{\v}+h_{\v}=0.
\end{equation}

To complete the study of variational behavior of pseudo-null curves we still have to calculate the Frenet frame variation and the pseudo-torsion variation. 
\begin{lemma}\label{lema-variation2}
	Let $\gamma_{t}(p)$ be a pseudo-null variation starting from a pseudo-null curve $\gamma_{0}(p)$. The following assertions hold: 
	\begin{enumerate}[(a)]\itemsep0pt
		\item $\ds\dfrac{DT}{\partial t}\bigg|_{t = 0}  = \varphi_V N + \psi_V B$;
		\item $\ds\dfrac{DN}{\partial t}\bigg|_{t = 0}  = \psi_V T + \alpha_{V} N $;
		\item $\ds\dfrac{DB}{\partial t}\bigg|_{t = 0}  =  \varphi_V T -\alpha_{V} B$;
		\item $\ds\dfrac{\partial \tau}{\partial t}\bigg|_{t = 0}  = \alpha'_V +\psi_V-\tau\rho_V$;
	\end{enumerate}
	where  
	\begin{equation}\label{eq-notacion1}
		\begin{split}
			\varphi_V &= -\langle\nabla_T V,B\rangle=f_V+g'_V+\tau g_V; \\ 
			\psi_V &=-\langle\nabla_T V, N\rangle=h'_V-\tau h_V; \\
			\alpha_V &= \varphi'_V +\tau\varphi_V + Gg_V -\rho_V.
		\end{split}	
	\end{equation}
\end{lemma}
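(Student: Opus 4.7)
My strategy is to exploit the commutation $[V,\ol T]=0$ of the coordinate vector fields of the variation together with the Ricci identity
\begin{equation*}
\frac{D}{\partial t}\,\nabla_{\ol T}Z \;=\; \nabla_{\ol T}\frac{DZ}{\partial t} \;-\; R(V,\ol T)Z,
\end{equation*}
where the sign of the curvature term is dictated by the non-standard convention \eqref{eq-curvature-tensor}. All computations are evaluated at $t=0$ in the arc-length parametrization, so that $v=1$ and $\ol T=T$.

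For (a), use $[V,\ol T]=0$ to write $\tfrac{D\ol T}{\partial t}=\nabla_{\ol T}V=v\nabla_TV$. Combined with $\ol T=vT$, differentiating in $t$ gives $\tfrac{DT}{\partial t}=\nabla_TV-(v_t/v)T$, and Lemma~\ref{lema-variation1}(a) identifies $v_t/v=\rho_V$ at $t=0$. Expanding $\nabla_TV$ in the Frenet frame (already computed in the excerpt), its tangential coefficient is exactly $\rho_V$, so after cancellation only the $N$- and $B$-components $\varphi_V$ and $\psi_V$ survive.

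For the structural form of (b) and (c), I would differentiate the six pseudo-orthonormality relations $\langle T,T\rangle=1$, $\langle N,N\rangle=\langle B,B\rangle=0$, $\langle T,N\rangle=\langle T,B\rangle=0$, $\langle N,B\rangle=-1$ in $t$ and insert (a); this forces $\tfrac{DN}{\partial t}=\psi_VT+\alpha_VN$ and $\tfrac{DB}{\partial t}=\varphi_VT-\alpha_VB$ for a single unknown scalar $\alpha_V$. To pin down $\alpha_V$ and at the same time prove (d), I would invoke the Ricci identity twice. First with $Z=T$, using $\nabla_{\ol T}T=vN$ on the left and $R(V,T)T=-G(g_VN+h_VB)$ on the right; the $N$-component of the resulting equation yields $\alpha_V=\varphi'_V+\tau\varphi_V+Gg_V-\rho_V$, while the $B$-component automatically reduces to the preservation condition \eqref{eq-comp1}. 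Second with $Z=N$, using $\nabla_{\ol T}N=v\tau N$ and $R(V,T)N=-Gh_VT$; the $N$-component now gives $\tau_t=\alpha'_V+\psi_V-\tau\rho_V$, establishing (d), and once again the $T$-component reduces to \eqref{eq-comp1}.

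The only delicate point is the bookkeeping of signs: the paper's convention \eqref{eq-curvature-tensor} reverses the standard Ricci commutator, which is precisely what produces the $+Gg_V$ in $\alpha_V$ rather than the naive $-Gg_V$, and the $+G$ in the auxiliary equation \eqref{eq-comp1}. Apart from that, the proof reduces to a direct symbolic computation in the Frenet frame, relying only on Lemma~\ref{lema-variation1} and the Frenet equations \eqref{eq-Frenet}.
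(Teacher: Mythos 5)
Your proof is correct and follows essentially the same route as the paper: part (a) via $[V,\ol T]=0$ and Lemma \ref{lema-variation1}, the curvature commutation identity (with the paper's sign convention \eqref{eq-curvature-tensor}) applied to $\nabla_T T=N$ and to the Frenet equation for $N$ to obtain $\alpha_V$ and part (d), and the pseudo-orthonormality relations for the remaining components. The only difference is organizational — you fix the structural form of (b) and (c) from orthonormality before determining $\alpha_V$, whereas the paper computes $\nabla_V N$ in full first — and your observation that the leftover components reproduce \eqref{eq-comp1} is a correct consistency check.
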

\begin{proof}
	Throughout the proof the equations \eqref{eq-constant-curvature} and \eqref{eq-Frenet}, as well as Lemma \ref{lema-variation1} are  repeatedly applied. Denoting also by $\nabla_{V}=\frac{D}{dt}$ the covariant derivative, we first have that
	%\begin{equation*}
			\begin{align*}
				\nabla_V T &= \langle\nabla_V T,T\rangle T -\langle\nabla_V T,B\rangle N-\langle\nabla_V T,N\rangle B\\
						   &= - \langle\nabla_T V-\rho_V T,B\rangle N - \langle\nabla_T V-\rho_V T,N\rangle B\\
						   %&= -\langle\nabla_T V,B\rangle N - \langle\nabla_T V,N\rangle B \\
						   &= \varphi_V N + \psi_V B.  
			\end{align*}
	We also find that
	%\begin{equation*}
			\begin{align*}
				\nabla_V N &= \nabla_V\nabla_T T = \nabla_T\nabla_V T + \nabla_{[V,T]}T - G\{\langle T,V\rangle T - \langle T,T\rangle V\}\\
						   &= \nabla_T (\varphi_V N + \psi_V B)-\rho_V\nabla_{T}T-G\langle T,V\rangle T + GV \\
						   %&= \varphi'_V N +\tau\varphi_V N + \psi'_VB +\psi_V (T-\tau B)-\rho_V N -G\langle V,B\rangle N -G\langle V,N\rangle B \\
						   &=\psi_V T + \left(\varphi'_V +\tau\varphi_V+G g_{V}-\rho_V\right)N
			\end{align*}
	%	\end{equation*}
	To obtain (c) we proceed as follows:
	\begin{equation*}
		\begin{aligned}
			\nabla_V B  &=\langle\nabla_V B,T\rangle T -\langle\nabla_V B,B\rangle N -\langle\nabla_V B,N\rangle B \\
				&=-\langle B,\nabla_VT\rangle T +\langle B,\nabla_V N\rangle B \\
				&=\varphi_V T - \left(\varphi'_V +\tau\varphi_V+G g_{V}-\rho_V\right) B.
		\end{aligned}
	\end{equation*}	
	Taking the covariant derivative $\nabla_V$ to the second Frenet equation we get
	\begin{equation}\label{e3}
		\begin{aligned}
			\nabla_V \nabla_T N &= \nabla_V (\tau N) = V(\tau)N + \tau\nabla_V N.  
		\end{aligned}
	\end{equation}
	The last equation leads to
	\begin{equation}\label{e5}
		\begin{aligned}
			V(\tau) &= \left\langle  -\nabla_V\nabla_T N +\tau\nabla_V N , B\right\rangle \\
			&= \left\langle  -\nabla_T\nabla_V N +\tau\rho_V N +G\langle N , V \rangle T +\tau\nabla_V N , B\right\rangle \\
			&=\psi_{V}+\left(\varphi'_V +\tau\varphi_V+G g_{V}-\rho_V\right)'-\tau \rho_{V},
		\end{aligned}
	\end{equation}
	which establishes property (d). 
\end{proof}

%Supongamos ahora que estamos %en $\R^3_1$ ($G=0$) y 
%que la variación preserva la longitud de arco ($V(v)=0$). Entonces las ecuaciones anteriores se reducen a:
%\begin{align}
%\<\nabla_TV,T\> &=0,\label{eq1}\\
%\<\nabla_T^2V+GV,N\> &=0,\label{eq2}\\
%V(\tau)&= -\<\nabla_T^3V,B\>+\tau\<\nabla_T^2V,B\>\\
 %&\phantom{=} -G\<\nabla_TV,B\>+G\tau\<V,B\>-G\<V,T\>.\label{eq3}
%\end{align}
%Como la ecuación (\ref{eq1}) implica que $\nabla_TV$ no tiene componente en $T$, entonces
%\begin{align}
%\nabla_TV &= \varphi N+\psi B.\label{eq5}
%\end{align}
%De (\ref{eq3}) y (\ref{eq5}) se obtiene lo siguiente.
Consider $\Lambda$ the set of arc-length parametrized pseudo-null curves in $M^3_1(G)$. Given a curve $\gamma\in\Lambda$, the tangent space $T_\gamma\Lambda$ is the set of variation vector fields $V\in\X(\gamma)$ associated to variations of curves in $\Lambda$ starting from $\gamma$.
\begin{proposition}\label{prop-tangent1}
	A vector field $\v$ along $\gamma\in\Lambda$ is tangent to $\Lambda$ if and only if it locally preserves the pseudo-null character and the arc-length parameter, thus
	\begin{equation*}
		T_\gamma\Lambda=\{V\in\X(\gamma)\,:\,\<\nabla_T^2V+GV,N\>=\<\nabla_TV,T\>=0\}.
	\end{equation*}
	Therefore, if we take
	\begin{align}
	V &= f_{\v}T+g_{\v}N+h_{\v}B,\label{eq4}
	\end{align}
	where $f_{\v},g_{\v},h_{\v}$ are smooth functions, then $V\in T_\gamma\Lambda$ if and only if
	\begin{equation}\label{eq4+}
	\left[(D_s-\tau)^2+G\right](h_{\v})=0\quad\text{and}\quad f'_{\v}+h_{\v}=0.
	\end{equation}
	In such a case 
	\begin{equation}\label{eq5}
		\begin{aligned}
			V(\tau)&=\varphi''_{\v}+(\tau\varphi_{\v})'+\psi_{\v}+G g'_{\v} \\
			%&=D_s(D_s+\tau)(f_{\v})+D_s[(D_s+\tau)^2+G](g_{\v})+(D_s-\tau)(h_{\v}), \\
			&=  g'''_{\v}+2 \tau  g''_{\v} + \left(3 \tau' + \tau^2 + G \right) g'_{\v} + \left(\tau''+2 \tau  \tau'  \right) g_{\v} + 2 \tau  f'_{\v}+ \tau' f_{\v}
		\end{aligned}	
	\end{equation}
	where $\varphi_{\v}$ and $\psi_{\v}$ are given by \eqref{eq-notacion1}.
\end{proposition}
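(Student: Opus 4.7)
The plan is to split the argument into three parts: the intrinsic characterization of $T_\gamma\Lambda$, the translation to the component conditions in \eqref{eq4+}, and the derivation of the formula \eqref{eq5} for $V(\tau)$; each part leans on Lemmas \ref{lema-variation1} and \ref{lema-variation2}.

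For the characterization, if $V\in T_\gamma\Lambda$ arises from a variation $\gamma(p,t)$ of arc-length parametrized pseudo-null curves, then $v(p,t)\equiv 1$ along the variation gives $\rho_V=0$ by Lemma \ref{lema-variation1}(a), while $\<\nabla_T T,\nabla_T T\>\equiv 0$ forces $\<\nabla_T^2 V+GV,N\>=0$ by Lemma \ref{lema-variation1}(c). For the converse, given $V\in\X(\gamma)$ satisfying both scalar constraints, I would compute the candidate infinitesimal change of pseudo-torsion from the formula to be proved in \eqref{eq5} and then realize the variation by integrating the Frenet system \eqref{eq-Frenet} for each small $t$ with perturbed pseudo-torsion $\tau+tV(\tau)+O(t^2)$ and an initial adapted frame at a basepoint $s_0$ chosen so that $\partial_t\gamma|_{t=0}=V$; the uniqueness of pseudo-null curves with prescribed pseudo-torsion recalled in Section 3 makes this reconstruction canonical up to a rigid motion, and the two scalar constraints supply precisely the first-order compatibility needed to match $V$.

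For the component reduction, write $V=f_V T+g_V N+h_V B$ and read off from the expansions of $\nabla_T V$ and $\nabla_T^2 V$ already displayed earlier in the section: the $T$-component $\<\nabla_T V,T\>=f'_V+h_V$ yields the arc-length condition, and using $\<N,B\>=-1$ together with $\<N,N\>=\<N,T\>=0$ the projection $\<\nabla_T^2 V+GV,N\>$ equals $-[h''_V-2\tau h'_V+(\tau^2-\tau'+G)h_V]$, which a direct computation factors as $-[(D_s-\tau)^2+G]h_V$, giving \eqref{eq4+}.

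Finally, for \eqref{eq5} I would invoke Lemma \ref{lema-variation2}(d): since $\rho_V=0$, it reduces to $V(\tau)=\alpha'_V+\psi_V$ with $\alpha_V=\varphi'_V+\tau\varphi_V+Gg_V$, and expanding the derivative produces the first displayed form. The second expression is a routine (though lengthy) substitution: plug in $\varphi_V=f_V+g'_V+\tau g_V$ and $\psi_V=h'_V-\tau h_V$, differentiate twice, and use $h_V=-f'_V$ from \eqref{eq4+} to eliminate the $h_V$-terms, then collect by orders of derivatives of $g_V$ and $f_V$. The main obstacle is the converse construction step, since one must verify that the candidate family produced from the two scalar constraints actually lies in $\Lambda$ rather than only matches it to first order; this is handled in the style of the analogous arguments in \cite{del_amor_hamiltonian_2014,del_amor_lie_2015,del_amor_null_2016}, and the remaining algebra is mechanical.
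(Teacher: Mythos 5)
Your forward direction, the component reduction, and both lines of \eqref{eq5} are correct and coincide with what the paper does: Lemma \ref{lema-variation1}(a),(c) give the two scalar constraints for a genuine variation, the $B$-component of $\nabla_T^2V+GV$ paired against $N$ via $\<N,B\>=-1$ gives $-\bigl[(D_s-\tau)^2+G\bigr](h_V)$, and Lemma \ref{lema-variation2}(d) with $\rho_V=0$ yields \eqref{eq5} after substituting $h_V=-f_V'$.

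The gap is in the converse, and it sits exactly where you flag ``the main obstacle.'' Integrating the Frenet system with perturbed pseudo-torsion $\tau+tV(\tau)$ does produce a family in $\Lambda$, but its variation vector field $W$ is a priori equal to $V$ only at the basepoint $s_0$ --- and even that requires prescribing not just $\partial_t\gamma(s_0,t)|_{t=0}=V(s_0)$ but also the $t$-derivatives of the entire initial frame, which must be taken to be the values dictated by $\varphi_V,\psi_V,\alpha_V$ from \eqref{eq-notacion1}. Your assertion that ``the two scalar constraints supply precisely the first-order compatibility needed to match $V$'' is the whole content of the converse and must be proved, not invoked: one has to check that under \eqref{eq4+} the tuple $(f_V,g_V,h_V,\varphi_V,\psi_V,\alpha_V)$ satisfies a \emph{closed} first-order linear ODE system in $s$, namely $f_V'=-h_V$, $g_V'=\varphi_V-f_V-\tau g_V$, $h_V'=\psi_V+\tau h_V$, $\varphi_V'=\alpha_V-\tau\varphi_V-Gg_V$, $\psi_V'=\tau\psi_V-Gh_V$ (this is where the condition $\bigl[(D_s-\tau)^2+G\bigr](h_V)=0$ enters) and $\alpha_V'=V(\tau)-\psi_V$. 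The deformation data of the reconstructed family satisfies the same system with the same values at $s_0$, so ODE uniqueness forces $W=V$ for all $s$. The paper packages exactly this verification as the matrix identity $\partial P/\partial s=C-[K_\gamma,P]$ and then appeals to Lemma 1 of \cite{musso_hamiltonian_2010} to integrate the data into an actual variation; without carrying out this computation, your construction matches $V$ only at one point and never actually uses the hypotheses \eqref{eq4+} in the converse direction.
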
 
\begin{proof}
	If $\v\in T_\gamma\Lambda$, Lemma \eqref{lema-variation1} implies 
	\begin{equation}\label{eq-compatibilidad}
			\<\nabla_T^2V+GV,N\>=\<\nabla_TV,T\>=0.
	\end{equation}
	Equations \eqref{eq4+} and \eqref{eq5} are a consequence of \eqref{eq-comp1}, \eqref{eq-comp2} and Lemma \ref{lema-variation2}(d), respectively.
	%\begin{align}
	%f'+h&=0,\label{eq6}\\
	%g'+\tau g+f &=\varphi,\label{eq7}\\
	%h'-\tau h &=\psi.\label{eq8}
	%\end{align}
	%Aplicando (\ref{eq2}) y (\ref{eq8}):
	%\begin{equation}\label{eq9}
	%0=\psi'-\tau\psi+Gh=(D_s-\tau)(\psi)+Gh=(D_s-\tau)^2(h)+Gh.
	%\end{equation}
	%%De (\ref{eq6}) podemos escribir
	%%\begin{equation}\label{eq10}
	%%f=-D_s^{-1}h,
	%%\end{equation}
	%%donde $D_s^{-1}$ es una $s$-integral indefinida formal.
	Conversely, let $\v=f_{\v} T+g_{\v} N+ h_{\v} B$ be a vector field satisfying \eqref{eq-compatibilidad}, and 
	%Lemma \eqref{lema-variation2} gives rise to
	%\begin{align}
	%\nabla_VT &= \varphi N+\psi B,\\
	%\nabla_VN &= \psi T+\alpha N,\\
	%\nabla_VB &= \varphi T-\alpha B,
	%\end{align}
	%where $\alpha=\varphi'+\tau\varphi+Gg$. 
	consider the matrices
	\begin{equation}
	K_{\gamma}=\left(\begin{array}{rrrr}
	0 & -G & 0 & 0\\
	1 & 0 & 0 & 1\\
	0 & 1 & \tau_{\gamma} & 0\\
	0 & 0 & 0 & -\tau_{\gamma}
	\end{array}\right)\qquad\text{and}\qquad
	P= \left(\begin{array}{rrrr}
	0 & -Gf_{\v} & Gh_{\v} & Gg_{\v}\\
	f_{\v} & 0 & \psi_{\v} & \varphi_{\v}\\
	g_{\v} & \varphi_{\v} & \alpha_{\v} & 0\\
	h_{\v} & \psi_{\v} & 0 & -\alpha_{\v}
	\end{array}\right),
	\end{equation}
	where
	%Consideremos la uno-forma de Maurer-Cartan $\Theta=\bfF^{-1}d\bfF$, de modo que
	%\begin{equation}
	%\Theta=K ds+P dt,\qquad d\Theta+\Theta\wedge\Theta=0.
	%\end{equation}
	%Lo anterior equivale a
	%\begin{equation}
	%\frac{\partial K}{\partial t}-\frac{\partial P}{\partial s}=[K,P],
	%\end{equation}
	%que evaluado en $t=0$ da
	\begin{align}
	%f &= -\int_0^s h(u)du+\text{cst},\label{neq1}\\
	\varphi_{\v} &= g_{\v}'+\tau g_{\v}+f_{\v},\label{neq2}\\
	\psi_{\v} &= h_{\v}'-\tau h_{\v},\label{neq3}\\
	%\psi' &= \tau\psi-Gh,\label{neq4}\\
	\alpha_{\v} &= \varphi_{\v}'+\tau\varphi_{\v}+Gg_{\v}.\label{neq5}
	%V(\tau) &= \alpha'+\psi=\varphi''+(\tau\varphi)'+\psi+Gg'.\label{neq6}
	\end{align}
	%Observemos que (\ref{neq3})--(\ref{neq4}) implican
	%\begin{equation}
	%[(D_s-\tau)^2+G](h)=0.
	%\end{equation}
	%No es difícil expresar $V(\tau)$ en términos de $f,g,h$:
	%\begin{equation}\label{Vtau}
	%V(\tau)=D_s(D_s+\tau)(f)+D_s[(D_s+\tau)^2+G](g)+(D_s-\tau)(h).
	%\end{equation}
	%The Gauss equation for $M^3_1(G)\subset\R^4_\nu$ (with $\nu=1,2$ according to the sign of $G$) establishes
	%\begin{align*}
	%\frac{\partial X}{\partial s} &= \nabla_TX-G\<T,X\>\gamma,\\[2mm]
	%\frac{\partial X}{\partial t} &= \nabla_VX-G\<V,X\>\gamma,
	%\end{align*}
	%where $X=X(s,t)$ is vector field along $\gamma=\gamma(s,t)$. By using the Gauss equation and equations \eqref{eq4+} we obtain
	By using equations \eqref{eq4+} we obtain
	\begin{equation}
	\frac{\partial P}{\partial s}=C-[K_\gamma,P],
	\end{equation}
	where 
	\begin{equation}
	C=\left(\begin{array}{rrrr}
	0 & 0 & 0 & 0\\
	0 & 0 & 0 & 0\\
	0 & 0 & c & 0\\
	0 & 0 & 0 & -c
	\end{array}\right),\qquad c=\alpha_{\v}'+\psi_{\v}.
	\end{equation}
	The same reasoning as in Lemma 1 of \cite{musso_hamiltonian_2010} proves that there exists a pseudo-null curve variation $\gamma(s,t)$ of $\gamma(s)$ whose variation vector field is $\v$.
\end{proof}

As we mentioned above, a parallel frame satisfying the equations \eqref{parallel-frame} could also be used to achieve the corresponding variational results in a similar way. However, for the purpose of determining the pseudo-curvature variation $V(k)$, we shall use the formula of $V(\tau)$ in Lemma \ref{lema-variation2}(d) and the relationships \eqref{eq-relation1} and \eqref{eq-relation2}.
%\begin{equation}\label{eq-frenet-parallel}
	%N=k\xi; \quad B= \frac{1}{k}\eta; \quad \tau k=k'.
%\end{equation}
Considering a vector field $V=\tilde{f}T+\tilde{g}\xi+\tilde{h}\eta$, we easily get 
\begin{equation}\label{eq-relations}
	\begin{aligned}
	f&=\tilde{f}; \quad g= \frac{1}{k}\tilde{g}; \quad h=k\tilde{h}; \quad \rho=\tilde{\rho}; \quad \varphi= \frac{1}{k}\tilde{\varphi}; \quad \psi=k\tilde{\psi}; \\
	\tilde{\rho}&=\tilde{f}'+k\tilde{h};\quad \tilde{\varphi}=k\tilde{f}+\tilde{g}'; \quad \tilde{\psi}=\tilde{h}',
	\end{aligned}
\end{equation}
where $\tilde{\varphi}=-\<\nabla_TV,\eta\>$, $\tilde{\psi}=-\<\nabla_TV,\xi\>$ and $\tilde{\rho}=\<\nabla_TV,T\>$. Derivating the equation $\tau k=k'$ in the direction of $V$ we obtain
$$V(\tau)k+\tau V(k)=TV(k)-\rho k',$$
where we have used Lemma \ref{lema-variation1}(c) in the right-hand side. Multiplying both sides by $k$ and rearranging we obtain
$$(V(k))'k-k'V(k)=V(\tau)k^2+\rho k k'.$$
We now divide by $k^2$ and replace $V(\tau)$ by the expression in \eqref{e5} to achieve
$$ \left( \frac{V(k)}{k} \right)'=\varphi''+(\tau \varphi)'+Gg'+\psi-\rho'.$$
Finally, the relations given in \eqref{eq-relations} lead to the formula
\begin{equation}
	\begin{aligned}
		V(k)&=\tilde{\varphi}'+G\tilde{g}+k D_s^{-1}(k\tilde{\psi})-k\tilde{\rho} \\
		&=\tilde{g}''+k'\tilde{f}+G\tilde{g}-k D_s^{-1}(k'\tilde{h}).
	\end{aligned}
\end{equation}
As a consequence we state the corresponding version of Proposition \ref{prop-tangent1} in terms of the pseudo-curvature $k$.
\begin{proposition}\label{prop-tangent2}
	A vector field $\v$ along $\gamma\in\Lambda$ is tangent to $\Lambda$ if and only if it locally preserves the pseudo-null character and the arc-length parameter, thus
	\begin{equation*}
		T_\gamma\Lambda=\{V\in\X(\gamma)\,:\,\<\nabla_T^2V+GV,\xi\>=\<\nabla_TV,T\>=0\}.
	\end{equation*}
	Therefore, if we take
	\begin{align}
	V &= \tilde{f}_{\v} T+\tilde{g}_{\v}\xi+\tilde{h}_{\v}\eta,
	\end{align}
	where $\tilde{f}_{\v},\tilde{g}_{\v},\tilde{h}_{\v}$ are smooth functions, then $V\in T_\gamma\Lambda$ if and only if
	\begin{equation}
		\tilde{h}_{\v}''+ G \tilde{h}_{\v}=0\quad\text{and}\quad \tilde{f}_{\v}'+k \tilde{h}_{\v}=0.
	\end{equation}
	In such a case 
	\begin{equation}\label{eq-curvature-evolution}
		\begin{aligned}
			V(k)&=\tilde{\varphi}_{\v}'+ G \tilde{g}_{\v} + k \partial_{s}^{-1}(k \tilde{\psi}_{\v})+d_1 k\\
			&= \tilde{g}''+k'\tilde{f}+G\tilde{g}-k \partial_s^{-1}(k'\tilde{h})+d_2 k,
		\end{aligned}	
	\end{equation}
	where  the functions $\tilde{\varphi}$ and $\tilde{\psi}$ are given by \eqref{eq-relations}, $d_1$ and $d_2$ are constants of integration, and $\partial_s^{-1}$ is the anti-derivative operator verifying that $\partial_s^{-1}\circ\partial_s=I$ when acting on $\mathcal{P}_0$.
	%where $\tilde{\varphi}_{\v}=-\left\langle \nabla_{T} \v, \eta \right\rangle$ and $\tilde{\psi}_{\v}=-\left\langle \nabla_{T}\v, \xi \right\rangle$.
\end{proposition}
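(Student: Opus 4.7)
The plan is to derive Proposition \ref{prop-tangent2} as a direct translation of Proposition \ref{prop-tangent1} through the change-of-frame formulas \eqref{eq-relations}. The first equivalence to establish is the condition for $V$ to belong to $T_\gamma\Lambda$; the evolution formula for $V(k)$ is essentially contained in the computation carried out in the paragraphs preceding the statement, so only the last constant-of-integration discussion needs extra comment.

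For the locally pseudo-null condition, I would substitute $h_V = k\tilde{h}_V$ into $(D_s-\tau)^2 h_V + G h_V = 0$ and use the defining identity $k' = \tau k$ of the parallel frame. A short computation gives
\begin{equation*}
(D_s-\tau)(k\tilde{h}_V) = k'\tilde{h}_V + k\tilde{h}_V' - \tau k\tilde{h}_V = k\tilde{h}_V',
\end{equation*}
and iterating once more yields $(D_s-\tau)^2(k\tilde{h}_V) = k\tilde{h}_V''$. Plugging this back, the first condition in \eqref{eq4+} becomes $k(\tilde{h}_V'' + G\tilde{h}_V) = 0$, and since $k$ is nowhere vanishing this is equivalent to $\tilde{h}_V'' + G\tilde{h}_V = 0$. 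The arc-length condition $f_V' + h_V = 0$ translates immediately into $\tilde{f}_V' + k\tilde{h}_V = 0$ via $f_V = \tilde{f}_V$ and $h_V = k\tilde{h}_V$.

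For the pseudo-curvature evolution, I would rely on the derivation already made just before the proposition: starting from $\tau k = k'$, differentiating along $V$, using Lemma \ref{lema-variation1}(c) and the expression \eqref{eq5} for $V(\tau)$ yields $(V(k)/k)' = \tilde{\varphi}_V' + G\tilde{g}_V + k\partial_s^{-1}(k\tilde\psi_V) - k\tilde\rho_V$ up to a constant. Integrating once introduces a single integration constant $d_1$, producing the first line of \eqref{eq-curvature-evolution}; the second form follows by replacing $\tilde{\varphi}_V$ and $\tilde{\psi}_V$ by their expressions \eqref{eq-relations} in terms of $\tilde{f}_V, \tilde{g}_V, \tilde{h}_V$ and integrating by parts inside $\partial_s^{-1}$, absorbing the new constant into $d_2$. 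The only subtlety is making explicit that $\partial_s^{-1}$ denotes the right inverse of $D_s$ on $\mathcal{P}_0$, which accounts for the appearance of the free constants $d_1, d_2$.

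The main obstacle — if any — is bookkeeping rather than conceptual: one must be careful that the identity $k' = \tau k$ is used consistently so that the cross-terms cancel in the reduction of $(D_s-\tau)^2$ acting on $k\tilde{h}_V$, and that the integration constants arising from the two different rearrangements of \eqref{eq-curvature-evolution} are treated symbolically as independent scalars $d_1, d_2$. Once these are in place, the proof is a clean pullback of Proposition \ref{prop-tangent1} along the linear change of frame \eqref{eq-relations}.
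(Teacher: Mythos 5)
Your proposal is correct and follows essentially the same route as the paper, which likewise obtains Proposition~\ref{prop-tangent2} by pulling back Proposition~\ref{prop-tangent1} through the relations \eqref{eq-relations} (using $k'=\tau k$ so that $(D_s-\tau)^2(k\tilde h)=k\tilde h''$ and $k\neq0$) and by the integration of $\bigl(V(k)/k\bigr)'$ carried out just before the statement. The only blemish is a notational slip where you equate $\bigl(V(k)/k\bigr)'$ with the already-integrated expression $\tilde\varphi_V'+G\tilde g_V+k\partial_s^{-1}(k\tilde\psi_V)-k\tilde\rho_V$; your subsequent integration step shows you mean the right thing, so this does not affect the argument.
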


%\begin{remark}\label{remark-burger}
	To close this section we shall find particular vector fields generating interesting flows.
	Let us consider the specific vector field $V=aT+bN$, where $a,b$ are constants, $b\neq0$. The equation \eqref{eq5} applied to this vector field gives rise to
	\begin{equation}\label{eq12}
	\tau_t=b\tau_{ss}+2b\tau\tau_s+a\tau_s.
	\end{equation}
	As a matter of fact, the latter equation constitutes the Burgers' equation \eqref{eq-burger} through the transformation
	$$\tau(s,t)=\frac1{2\sqrt{b}}u\big(b^{-1/2}s,t\big)-\frac a{2b}.$$

	In particular, if we take $a=0$ and $b=1$, we get the pseudo-null vortex filament equation (see \cite{grbovic_backlund_2016}) given by
	\begin{equation*}
		\gamma_{t}=N=\gamma_{s}\times\gamma_{ss},
	\end{equation*}
	whose pseudo-torsion flow is
	\begin{equation}\label{eq-mburger}
		\tau_t=\tau_{ss}+2\tau\tau_s.
	\end{equation}
	The corresponding recursion operator for the equation \eqref{eq-mburger} (after rescaling $\mathcal{R}_{1}$ in \eqref{burger-recursion}) becomes 
	\begin{equation}\label{mburger-recursion}
		\mathcal{R}=\tau_{s}D_{s}^{-1}+\tau+D_{s}.
	\end{equation}
%\end{remark}

We can generalize somewhat further this situation. Let us consider a pseudo-null variation $\gamma$ preserving the arc-length parameter such that all the curves $\gamma_{t}$ of the variation are contained in the same lightlike cylindrical surface. This means that the variation vector field $V=f_{\v} T+g_{\v} N+h_{\v} B$ associated to such a variation verifies that $h_{\v}=0$ and, from \eqref{eq-comp2}, $f_{\v}$ is constant. The tangential part of $V$ does not modify the geometry of the curve variation, just its parametrization. In this case, since the tangential coefficient is constant, this is nothing but a linear re-parametrization, for that reason it is not restrictive to take $f_{\v}=0$ and so $V=g_{\v} N$. According to the equation \eqref{eq5} we calculate the evolution of $\tau$ as follows:
\begin{equation}\label{eq-Vtau}
	\begin{aligned}
		V(\tau)&=\varphi''_{\v}+(\tau \varphi_{\v})'+G g'_{\v} \\
			&=\mathcal{R}(\varphi'_{\v})+G g'_{\v} \\
			&=\mathcal{R}(g''_{\v}+(\tau g_{\v})')+G g'_{\v} \\
			&=(\mathcal{R}^{2}+G)(g'_{\v}).
	\end{aligned}
\end{equation}
The latter equation gives the pseudo-torsion variation in terms of the recursion operator of the Burgers' equation. 
\begin{remark}\label{remark-recursion}
	It is easy to show that since $\mathcal{R}$ is a recursion operator for the Burgers' equation, then so is the operator $\mathcal{R}^{2}+G$. This fact shall be employed in Section \ref{sect-hierarchies} to construct a recursion operator for the pseudo-null vortex filament equation.
\end{remark}

\section{A Lie algebra structure on local vector fields}\label{sect-lie-algebra}
This section is intended for defining a Lie algebra structure on the set of local vector fields which locally preserve the pseudo-null character. This Lie algebra will be the cornerstone to introduce an appropriate  formal variational calculus on the phase space of pseudo-null curves flows in Section \ref{sect-hierarchies}, thereby enabling the definition of concepts such as symmetry,  recursion operator, and so on. To this end, we need first to set up the spaces in which we are going to work. Let $\tau$ be a smooth function defined on an interval $I$ and set $\mathcal{P}$ the real algebra of polynomials in  $\tau$ and its derivatives of arbitrary order, i.e.,
$$\mathcal{P}=\mathbb{R}\left[ \tau^{(n)} : n\in\mathbb{N}\right].$$ 
%where $\tau^{(m)}= \tau^{(m)}(s)$.

Let $\gamma : I\rightarrow M^3_1(G)$ be a pseudo-null curve with pseudo-torsion $\tau$, and consider the set of vector fields along $\gamma$ whose components are polynomial functions
\begin{equation*}
	\mathfrak{X}_{\mathcal{P}}(\gamma)=\{V=f_V T +  g_V N + h_V B \in \mathfrak{X}(\gamma) : f_V , g_V, h_V \in\mathcal{P}\}.
\end{equation*}
An element of $\mathfrak{X}_{\mathcal{P}}(\gamma)$ will be called a $\mathcal{P}$-local vector field along $\gamma$. The set of $\mathcal{P}$-local vector fields (locally preserving the pseudo-null character) will be denoted by
\begin{equation*}
	\mathfrak{X}_{\mathcal{P}}^{*}(\gamma)=\{V=f_V T + g_V N + h_V B \in \mathfrak{X}_{\mathcal{P}}(\gamma) : h_{\v}''-2 \tau  h_{\v}'+(\tau^2-\tau'+G) h_{\v}=0\},
\end{equation*}
and within it, the $\mathcal{P}$-local variation vector fields locally preserving arc-length parameter are described as
\begin{equation}\label{tangent-gamma}
	T_{\mathcal{P},\gamma}\Lambda=T_{\gamma}(\Lambda)\cap \mathfrak{X}_{\mathcal{P}}^{*}(\gamma)=\{V\in \mathfrak{X}_{\mathcal{P}}^{*}(\gamma) : f'_{V}=-h_{V} \}.
\end{equation}
We now introduce a convenient derivation on both the differential algebra and the local vector fields along a pseudo-null curve. Following \cite{del_amor_hamiltonian_2014,del_amor_null_2016} and bearing in mind Lemma \ref{lema-variation2}, given $\v\in \mathfrak{X}^*_{\mathcal{P}}(\gamma)$, we denote by $D_\v:\mathfrak{X}_{\mathcal{P}}(\gamma)\rightarrow \mathfrak{X}_{\mathcal{P}}(\gamma)$ the unique tensor derivation fulfilling:
\begin{enumerate}[(1)]
			\item $\displaystyle V(\tau) =  \varphi''_V +(\tau\varphi_V)' +\psi_V -\rho'_V -\tau\rho_V+Gg'_V=\alpha'_V +\psi_V-\tau\rho_V$;
			\item  $\displaystyle V(f')   =  V(f)' - \rho_V f' \text{ for all } f\in\mathcal{P}$;
			\item  $\displaystyle \left(
					\begin{array}{c}
							\nabla_V T    \\
							\nabla_V N    \\
							\nabla_V B    \\
					\end{array}
				\right) = 	\left(
								\begin{array}{ccc}
									0          & \varphi_V   &\psi_V    \\
									\psi_V     & \alpha_V     &0   \\
									\varphi_V  &     0       &-\alpha_V  \\
								\end{array}
							\right)
										 \cdot \left(
													\begin{array}{c}
														 T    \\
														 N    \\
														 B    \\
													\end{array}
												\right)$,
\end{enumerate}
where $\varphi_V, \psi_V$ and $\alpha_V$ are given by \eqref{eq-notacion1}. We now restrict our definition of Lie bracket only on the set $\mathfrak{X}_{\mathcal{P}}^{*}(\gamma)$, which will be enough for our purposes.

\begin{theorem}\label{prop-lie-bracket}
	Let $\gamma$ be a pseudo-null curve in $\Lambda$ and consider $[\cdot,\cdot]_{\gamma}:\mathfrak{X}^*_{\mathcal{P}}(\gamma)\times \mathfrak{X}^*_{\mathcal{P}}(\gamma)\rightarrow \mathfrak{X}^*_{\mathcal{P}}(\gamma)$ the map given by
	$$[\v_{1},\v_{2}]_{\gamma}=D_{\v_{1}}\v_{2}-D_{\v_{2}}\v_{1}.$$  Then:
	\begin{enumerate}[(a)]
		\item $[\cdot,\cdot]_\gamma$ is well defined, i.e., if $\v_{1},\v_{2}\in\mathfrak{X}^*_{\mathcal{P}}(\gamma)$ then $[\v_{1},\v_{2}]_\gamma\in \mathfrak{X}^*_{\mathcal{P}}(\gamma)$.
		\item $[\v_{1},\v_{2}]_{\gamma}(f)=\v_{1}\v_{2}(f)-\v_{2}\v_{1}(f)$ for all $f\in \mathcal{P}$.
		\item $[\cdot,\cdot]_{\gamma}$ is skew-symmetric.
		\item For $V_1,V_2\in \mathfrak{X}^*_{\mathcal{P}}(\gamma)$ and $U\in \mathfrak{X}_{\mathcal{P}}(\gamma)$ we have 
		\begin{equation}\label{formula-curvatura}
			D_{[V_1,V_2]_\gamma}U-D_{V_1}D_{V_2}U+D_{V_2}D_{V_1}U=G \left(\left\langle U,V_{1} \right\rangle V_{2}-\left\langle U,V_{2} \right\rangle V_{1}\right).
		\end{equation}
		\item $[\cdot,\cdot]_{\gamma}$ satisfies the Jacobi identity.
		\item $[\cdot,\cdot]_{\gamma}$ is closed for elements in $T_{\mathcal{P},\gamma}(\Lambda)$, i.e., if $V,U \in T_{\mathcal{P},\gamma}(\Lambda)$, then $[V,U]_{\gamma}\in T_{\mathcal{P},\gamma}(\Lambda)$.
	\end{enumerate}
\end{theorem}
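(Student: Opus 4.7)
I would tackle the six parts in the order (c), (b), (d), (a), (e), (f), because each feeds the next. Skew-symmetry (c) is immediate from the defining formula $[V_1,V_2]_\gamma=D_{V_1}V_2-D_{V_2}V_1$. For (b) I would exploit the fact that the derivation $D_V$ is determined on all of $\mathcal{P}$ by its action on the single generator $\tau$ (given in (1)), together with the commutation rule (2) which lets one push $D_V$ past a total derivative. Both $D_{[V_1,V_2]_\gamma}$ and the scalar commutator $D_{V_1}D_{V_2}-D_{V_2}D_{V_1}$ are derivations of $\mathcal{P}$, so they coincide as soon as they agree on $\tau$. That reduces (b) to checking $V_1(V_2(\tau))-V_2(V_1(\tau))=[V_1,V_2]_\gamma(\tau)$, which is a direct but bookkeeping-heavy substitution of the formula for $V(\tau)$ from (1) and the explicit expressions \eqref{eq-notacion1} for $\varphi_V,\psi_V,\alpha_V$.

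The principal technical step is the curvature identity (d), and I expect it to be the main obstacle. The strategy is to expand $U=f_UT+g_UN+h_UB$, compute $D_{V_1}D_{V_2}U$ from the matrix definition (3), antisymmetrize in $V_1\leftrightarrow V_2$, and subtract $D_{[V_1,V_2]_\gamma}U$; the claim then reduces to a coefficient-wise identity for the triple $(f_U,g_U,h_U)$. Almost all terms not involving $G$ should cancel, corresponding to the fact that the underlying ``connection structure'' on components is formally flat; the surviving $G$-contributions (which enter only through $\alpha_V=\varphi_V'+\tau\varphi_V+Gg_V-\rho_V$) must reassemble into the ambient curvature tensor $G(\langle U,V_1\rangle V_2-\langle U,V_2\rangle V_1)$ from \eqref{eq-constant-curvature}. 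This verification is coefficient-by-coefficient and will require careful use of property (2) whenever derivatives of polynomial coefficients get reshuffled.

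With (d) available, the remaining parts fall into place. For (a), writing $W=[V_1,V_2]_\gamma$ and picking $U=B$ in \eqref{formula-curvatura} (or equivalently pairing with $N$), the binormal coefficient $h_W$ inherits the homogeneous ODE $h_W''-2\tau h_W'+(\tau^2-\tau'+G)h_W=0$ from the fact that $h_{V_1},h_{V_2}$ satisfy it, so $W\in\mathfrak{X}^*_\mathcal{P}(\gamma)$. For the Jacobi identity (e), I would evaluate the cyclic sum on an arbitrary $f\in\mathcal{P}$ using (b) (reducing it to the ordinary Jacobi identity of scalar derivations) and on an arbitrary $U\in\mathfrak{X}_\mathcal{P}(\gamma)$ using (d), where the cyclic sum of the right-hand sides in \eqref{formula-curvatura} vanishes by the algebraic first Bianchi identity satisfied by the constant-curvature tensor \eqref{eq-constant-curvature}. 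Finally, for (f), assuming $f'_V+h_V=0$ and $f'_U+h_U=0$, I would read off the tangent and binormal components of $[V,U]_\gamma$ from (3), and check by direct computation that $f'_{[V,U]_\gamma}+h_{[V,U]_\gamma}=0$, using property (2) to relate $D_V(f'_U)$ and $D_V(f_U)'$ and the two hypotheses to eliminate the tangential derivatives. This is a routine but tedious calculation of the same flavor as in \cite{del_amor_hamiltonian_2014,del_amor_null_2016}.
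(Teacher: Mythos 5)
Most of your plan tracks the paper's own computation: (c) is definitional, (b) reduces correctly to checking agreement on the generator $\tau$, (d) is verified frame-vector by frame-vector exactly as in the paper (which organizes the bookkeeping with an antisymmetrization operator $\bullet\,J$ acting on pairs $(f_1,f_2)$, $(\varphi_1,\varphi_2)$, etc.), (f) amounts to the identity $\rho_{[V,U]_\gamma}=V(\rho_U)-U(\rho_V)$, and your derivation of (e) from (d) plus the first Bianchi identity of the constant-curvature tensor is actually spelled out more fully than in the paper.

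The gap is in your treatment of (a). There is first an ordering problem: $D_W$ is defined only for $W\in\mathfrak{X}^*_{\mathcal{P}}(\gamma)$, so the term $D_{[V_1,V_2]_\gamma}U$ in \eqref{formula-curvatura} already presupposes (a) unless you explicitly extend the definition of $D$ via \eqref{eq-notacion1} to all of $\mathfrak{X}_{\mathcal{P}}(\gamma)$. More substantively, the mechanism you propose — ``picking $U=B$ in \eqref{formula-curvatura}'' — does not yield the second-order ODE for $h_W$, $W=[V_1,V_2]_\gamma$. Since $D_WB=\varphi_WT-\alpha_WB$, the instance $U=B$ only returns expressions for $\varphi_W$ and $\alpha_W$; the instance $U=T$ returns the first-order quantity $\psi_W=h_W'-\tau h_W$ in the form $V_1(\psi_{V_2})-V_2(\psi_{V_1})-(\alpha_{V_1}\psi_{V_2}-\alpha_{V_2}\psi_{V_1})-G(h_{V_1}f_{V_2}-h_{V_2}f_{V_1})$. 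The membership condition, however, is the further identity $\psi_W'-\tau\psi_W+Gh_W=0$, and to obtain it you must differentiate that expression, push the derivative past $V_i(\cdot)$ using rule (2) (which generates $\rho_{V_i}$-terms), substitute the explicit component $h_W$ of the bracket, and only then invoke $\psi_{V_i}'-\tau\psi_{V_i}+Gh_{V_i}=0$. That is a separate computation not contained in (d); moreover, your proof of (d) already requires the same explicit component formulas for $[V_1,V_2]_\gamma$, so the detour through (d) saves nothing. You should prove (a) first, by this direct computation, and only then state and prove (d).
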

\begin{proof}
To facilitate the writing of the proof we introduce a couple of operators.
%\item[1)] Dados dos campos de vectores $V_i=f_iT+g_iN+h_iB$, $i=1,2$, escribiremos $[V_1,V_2]_\gamma=f_{12}T+g_{12}N+h_{12}B$.
The operator $J$ is defined as $J(x,y)=(y,-x)$, where $(x,y)$ is either a pair of functions or a pair of vector fields. Let us denote by $V_i = f_i T +g_i N + h_i B$, $i=1,2$, vector fields in $\mathfrak{X}^*_{\mathcal{P}}(\gamma)$, and write a pair of elements associated to the vector fields $V_1$ and $V_2$ with a bar over it, for instance, $\bar{V} = (V_1,V_2)$, $\bar{f}=(f_1,f_2)$, $\bar{g}=(g_1,g_2)$, $\bar{h}=(h_1,h_2)$, 
$\bar{\varphi}=(\varphi_1,\varphi_2)$, and so on. Then the operator $\bullet$ is defined as the formal inner product, where the operation between components will be the appropriate in each case, e.g.
\begin{align*}
	\bar{f}\bullet J\bar{\varphi} &= f_1\varphi_2 - f_2\varphi_1 \\
	\bar{V}\bullet J\bar{f} &= V_1(f_2) - V_2(f_1) 
\end{align*}
Some useful properties involving both operators are:
\begin{align}
	\label{o1}
	&\bar{f}\bullet J\bar{f} = 0 \\
	\label{o2}
	&\bar{f}\bullet J\bar{g} + \bar{g}\bullet J\bar{f} =0 \\
	\label{o3}
	&\left(\bar{V}\bullet J\bar{f}\right)' = \bar{V}\bullet J\bar{f}' + \bar{\rho}\bullet J\bar{f}' \\
	\label{o4}
	&\left(\bar{V}\bullet J\bar{f}\right)'' = \bar{V}\bullet J\bar{f}'' + 2\bar{\rho}\bullet J\bar{f}'' +\bar{\rho}'\bullet J\bar{f}' \\
	\label{o5}
	&\left(\bar{f}\bullet J\bar{g}\right)' = \bar{f}'\bullet J\bar{g} + \bar{f}\bullet J\bar{g}'  \\
	\label{o6}
	&h\cdot \left(\bar{V}\bullet J\bar{g}\right) = \bar{V}\bullet J(h\cdot \bar{g})-\bar{V}(h)\bullet J\bar{g}
\end{align}

 A simple calculation leads to
	\begin{align}\label{cc}
			f_{12} &= \bar{V}\bullet J\bar{f} + \bar{\psi}\bullet J\bar{g} + \bar{\varphi}\bullet J\bar{h} \nonumber \\
			g_{12} &= \bar{V}\bullet J\bar{g} + \bar{\varphi}\bullet J\bar{f} + \bar{\alpha}\bullet J\bar{g}  \\
			h_{12} &= \bar{V}\bullet J\bar{h} + \bar{\psi}\bullet J\bar{f} - \bar{\alpha}\bullet J\bar{h} \nonumber 
	\end{align}
where $[V_1,V_2]_{\gamma} =f_{12}T +g_{12}N + h_{12}B$. In obtaining the forthcoming formulas in the rest of the proof we shall use repeatedly formulas \eqref{o1}--\eqref{cc}.	
To prove \textit{(a)} we can check that
	\begin{align}\label{psi12}
		\psi_{12} &= \bar{V}\bullet J\bar{\psi} - \bar{\alpha}\bullet J\bar{\psi}  -G \bar{h}\bullet J\bar{f}
	\end{align}	
Thus, since $V_i \in\mathfrak{X}^{*}_{\mathcal{P}}(\gamma)$ then $\bar{\psi}'-\tau\bar{\psi}=-G\bar{h}$. A simple computation using \eqref{psi12} implies that $\psi_{12}' =\tau\psi_{12}-Gh_{12}$, which means that $[V_1,V_2]_{\gamma}\in\mathfrak{X}^*_{\mathcal{P}}(\gamma)$.
The paragraph \textit{(b)} yields from
	\begin{align}
		[V_1,V_2]_{\gamma} (\tau) &= \bar{V}\bullet J\left(\bar{V}(\tau) \right) = V_1(V_2 (\tau))- V_2(V_1 (\tau)),
	\end{align}
	for what the following relationships have been used: 
	\begin{align}\label{varphi12}
		\varphi_{12} &= \bar{V}\bullet J\bar{\varphi} + \bar{\alpha}\bullet J\bar{\varphi}  -G \bar{g}\bullet J\bar{f} \\
		\label{rho12}
		\rho_{12} &= \bar{V}\bullet J\bar{\rho} \\
		\alpha_{12} &= \bar{V}\bullet J\bar{\alpha}-\bar{\psi}\bullet J\bar{\varphi}-G\bar{h}\bullet J\bar{g}.
	\end{align}	
The linearity will be useful to prove \textit{(d)}, namely, it is sufficient to show \eqref{formula-curvatura} by replacing $U$ for each element of the Frenet frame. We obtain
	\begin{align*}
		D_{[V_1,V_2]_{\gamma}}T+D_{V_2}D_{V_1}T -D_{V_1}D_{V_2}T &= G\left(\bar{f}\bullet J\bar{g}N+\bar{f}\bullet J\bar{h}B\right) \\
		&= G\left(\langle T,V_1\rangle V_2 - \langle T,V_2\rangle V_1\right) \\
		D_{[V_1,V_2]_{\gamma}}N+D_{V_2}D_{V_1}N -D_{V_1}D_{V_2}N &= G\left(\bar{f}\bullet J\bar{h}T+\bar{g}\bullet J\bar{h}N\right) \\
		&= G\left(\langle N,V_1\rangle V_2 - \langle N,V_2\rangle V_1\right) \\
		D_{[V_1,V_2]_{\gamma}}B+D_{V_2}D_{V_1}B -D_{V_1}D_{V_2}B &= G\left(\bar{f}\bullet J\bar{g}T+\bar{h}\bullet J\bar{g}B\right) \\
		&= G\left(\langle B,V_1\rangle V_2 - \langle B,V_2\rangle V_1\right) 
	\end{align*}
Finally, the paragraph \textit{(c)} is immediate by definition, and \textit{(e)} can be easily deduced from \eqref{rho12}.
\end{proof}

\section{Geometric hierarchies of pseudo-null curve flows}\label{sect-hierarchies}
As a further application of the Lie bracket on the set of local vector fields in a pseudo-null curve defined in Section \ref{sect-lie-algebra}, we shall devote the last section to the searching of a hierarchy of infinity  many local symmetries for the pseudo-null vortex filament equation. The algebraic background presented here is a mere adaptation of the one developed in \cite{del_amor_hamiltonian_2014,del_amor_null_2016}. For that reason, those results that do not require any special handling are given without proofs. 

Consider $\Lambda$ the space of arc-length parametrized pseudo-null curves in the Lorentzian space form $M^3_1(G)$. A map $\f:\Lambda \rightarrow \mathcal{C}^\infty(I,\mathbb{R})$ is referred to a scalar field on $\Lambda$ and $\f(\gamma)$ will be also denoted by $\f_\gamma$.
Let $\mathcal{A}$ be the algebra of $\mathcal{P}$-valued scalar fields on $\Lambda$, i.e., if $\f\in \mathcal{A}$, then $\f_\gamma\in \mathcal{P}$ for all $\gamma\in \Lambda$. In this sense, we will also understand the pseudo-torsion scalar field $\ntau:\Lambda \rightarrow \mathcal{C}^\infty(I,\mathbb{R})$  with its obvious meaning.

Similarly, a map $\V:\Lambda \rightarrow \cup_{\gamma\in \Lambda} T_\gamma\Lambda$ is referred to as a vector field on $\Lambda$, and $\V(\gamma)$ will be also denoted by $\V_\gamma$. We shall denote the set of tangent vector fields on $\Lambda$ as $\mathfrak{X}(\Lambda)$, and within we consider the subset $\mathfrak{X}_{\mathcal{A}}(\Lambda)$ of vector fields $\V$ such that $\V_{\gamma}\in \mathfrak{X}_{\mathcal{P}}(\gamma)$. According to \eqref{tangent-gamma}, if we denote $\V=\f\T+\g \N+\h \B$, then
\begin{equation*}%\label{vector-fields}
	\mathfrak{X}_{\mathcal{A}}(\Lambda)=\left\{\V\in \mathfrak{X}(\Lambda): \f,\g,\h\in \mathcal{A};\, \h''-2 \ntau  \h'+(\ntau^2-\ntau'+G) \h=0;\, \f'=-\h \right\},
\end{equation*}
where the derivative and anti-derivative operators act on scalar fields as $\f'(\gamma)=\f'_\gamma$ and $\D^{-1}_{s}(\f)(\gamma)=D^{-1}_{s}(\f_\gamma)$ respectively. However, as $\h$ is a $\mathcal{P}$-valuated scalar field, the condition $\h''-2 \ntau  \h'+(\ntau^2-\ntau'+G) \h=0$ is fulfilled if and only if $\h=0$, and therefore we obtain that  
\begin{equation}\label{vector-fields}
	\mathfrak{X}_{\mathcal{A}}(\Lambda)=\left\{\V\in \mathfrak{X}(\Lambda): \f'=0; \,\g\in \mathcal{A};\, \h=0\right\}.
\end{equation}
Thus $\mathfrak{X}_{\mathcal{A}}(\Lambda)$ stands for the set of $\mathcal{A}$-local vector fields locally preserving the arc-length parameter, the pseudo-null character and the hyperplanarity. These vector fields commute with the tangent vector field $\T$, so they will be called evolution vector fields. We also denote by $\bar{\mathfrak{X}}_{\mathcal{A}}(\Lambda)$ and $\mathfrak{X}^*_{\mathcal{A}}(\Lambda)$ the sets of vector fields $\V$ such that $\V_\gamma\in \mathfrak{X}_{\mathcal{P}}(\gamma)$ and $\V_\gamma\in \mathfrak{X}^*_{\mathcal{P}}(\gamma)$, respectively. Hence,
\begin{equation*}
\begin{aligned}
	\bar{\mathfrak{X}}_{\mathcal{A}}(\Lambda)&=\left\{\V=\f\T+\g \N+\h \B: \f,\g,\h\in \mathcal{A}\right\}. \\
	\mathfrak{X}^*_{\mathcal{A}}(\Lambda)&=\left\{\V\in \bar{\mathfrak{X}}_{\mathcal{A}}(\Lambda): \h=0\right\}.
\end{aligned}
\end{equation*}
\begin{remark}
	In what follows, we shall operate with scalar fields and vector fields in the natural way, understanding that the result of the operation is again a scalar field or vector field. For instance, if $\V,\U$ are vector fields on $\Lambda$, then $\left\langle \V,\U \right\rangle$ is a scalar field, where $\left\langle \V,\U \right\rangle(\gamma)=\left\langle \V_\gamma,\U_\gamma \right\rangle$; or $\nabla_{\T} \V$ is again a vector field, where $\nabla_{\T} \V(\gamma)=\nabla_{\T_\gamma}\V_\gamma$ and so on.
\end{remark}
Hence, for $\V\in \mathfrak{X}^*_{\mathcal{A}}(\Lambda)$, the operator 
$D_{\V}:\bar{\mathfrak{X}}_{\mathcal{A}}(\Lambda) \rightarrow \bar{\mathfrak{X}}_{\mathcal{A}}(\Lambda)$ 
is defined as $(D_{\V}\U)(\gamma)=D_{\V_\gamma}\U_\gamma$. The operator $D_{\V}$ can also be described in other words when $\V\in \mathfrak{X}_{\mathcal{A}}(\Lambda)$. Consider $\gamma$ a pseudo-null curve in $\Lambda$,  and suppose that $\V_{\gamma}(s)=\frac{\partial \gamma}{\partial t}(s,0)$, then  
$$\ds (D_{\V}\U)_\gamma(s)=\left.\frac{D}{\partial t}\right|_{t = 0}\U_{\gamma_t}(s).$$
In fact, the tensor derivation $D_{\V}$ is an extension of the Fréchet derivative defined in \eqref{frechet-derivative} for derivations of vector fields on the pseudo-null curves space. In this way, this operator can be easily translated to the context of any other type of curves.

The Lie algebra structure on local vector fields locally preserving the pseudo-null character provided by Theorem \ref{prop-lie-bracket} (along a particular curve) can also be easily extended on the set $\mathfrak{X}^*_{\mathcal{A}}(\Lambda)$.

\begin{theorem}\label{prop-LieBracket2}
	The map  $[\cdot,\cdot]:\mathfrak{X}^*_{\mathcal{A}}(\Lambda)\times \mathfrak{X}^*_{\mathcal{A}}(\Lambda)\rightarrow \mathfrak{X}^*_{\mathcal{A}}(\Lambda)$  given by
	$$[\V,\U](\gamma)=[\V_{\gamma},\U_{\gamma}]_\gamma$$  is a Lie bracket verifying the following
	\begin{enumerate}[(a)]
		\item $[\V,\U](\f)=\V\U(\f)-\U\V(\f)$ for all $\f\in \mathcal{A}$.
		\item $[\cdot,\cdot]$ is closed for elements in $\mathfrak{X}_{\mathcal{A}}(\Lambda)$, i.e., if $\V,\U \in \mathfrak{X}_{\mathcal{A}}(\Lambda)$, then $[\V,\U]\in \mathfrak{X}_{\mathcal{A}}(\Lambda)$.
	\end{enumerate}
	Hence, $[\cdot,\cdot]$ is a Lie bracket, ($\mathfrak{X}^*_{\mathcal{A}}(\Lambda),[,])$ is a Lie algebra and the space of evolution vector fields $(\mathfrak{X}_{\mathcal{A}}(\Lambda),[,])$ is a Lie subalgebra of $\mathfrak{X}^*_{\mathcal{A}}(\Lambda)$.
\end{theorem}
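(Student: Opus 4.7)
The plan is to lift the pointwise Lie algebra of Theorem \ref{prop-lie-bracket} to the whole space $\mathfrak{X}^*_{\mathcal{A}}(\Lambda)$ by evaluating curve by curve. Since $[\V,\U]$ is prescribed as $\gamma\mapsto [\V_{\gamma},\U_{\gamma}]_{\gamma}$, bilinearity, skew-symmetry and the Jacobi identity cascade immediately from items (c) and (e) of Theorem \ref{prop-lie-bracket} applied at each $\gamma\in\Lambda$. What genuinely requires verification is that the output lies in $\mathfrak{X}^*_{\mathcal{A}}(\Lambda)$ rather than only fibrewise in each $\mathfrak{X}^*_{\mathcal{P}}(\gamma)$, together with the assertions (a) and (b).

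Well-definedness in $\mathfrak{X}^*_{\mathcal{A}}(\Lambda)$ is a matter of bookkeeping: the explicit formulas \eqref{cc} express the components of $[\V_\gamma,\U_\gamma]_\gamma$ as polynomial combinations of the scalar-field coefficients of $\V$ and $\U$ together with their $s$-derivatives, so the output has $\mathcal{A}$-valued components; combined with Theorem \ref{prop-lie-bracket}(a) at each $\gamma$, this yields $[\V,\U]\in \mathfrak{X}^*_{\mathcal{A}}(\Lambda)$. Property (a) is just the pointwise version of Theorem \ref{prop-lie-bracket}(b): for any $\f\in\mathcal{A}$ and $\gamma\in\Lambda$,
\[
[\V,\U](\f)(\gamma)=[\V_\gamma,\U_\gamma]_\gamma(\f_\gamma)=\V_\gamma\bigl(\U_\gamma(\f_\gamma)\bigr)-\U_\gamma\bigl(\V_\gamma(\f_\gamma)\bigr)=\bigl(\V\U(\f)-\U\V(\f)\bigr)(\gamma).
\]

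The substantive computation is (b). Denote the components of $\V$ and $\U$ by $\f_\V,\g_\V,\h_\V$ and $\f_\U,\g_\U,\h_\U$ respectively; by \eqref{vector-fields}, membership in $\mathfrak{X}_{\mathcal{A}}(\Lambda)$ forces $\h_\V=\h_\U=0$ and $\f_\V'=\f_\U'=0$. The auxiliary quantities $\psi=\h'-\ntau\h$ then vanish for both vector fields, so the binormal expression in \eqref{cc} collapses to $h_{12}=0$, and the tangential expression reduces to $f_{12}=\V(\f_\U)-\U(\f_\V)$. Identity \eqref{o3}, applied to the pair of tangential components whose $s$-derivatives vanish, forces $f_{12}'=0$; hence $[\V,\U]$ satisfies \eqref{vector-fields} and lies in $\mathfrak{X}_{\mathcal{A}}(\Lambda)$. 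I do not anticipate any deep obstacle here: every step is a polynomial identity in $\mathcal{P}$ that transfers verbatim to scalar fields on $\Lambda$ with no extra analytic input, and the only point requiring attention is the consistent translation between the pointwise notation used in the proof of Theorem \ref{prop-lie-bracket} and the scalar-field notation of Section \ref{sect-hierarchies}.
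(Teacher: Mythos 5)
Your proposal is correct and follows exactly the route the paper intends: the theorem is stated without proof precisely because it is the pointwise lifting of Theorem \ref{prop-lie-bracket}, with well-definedness and closure read off from the component formulas \eqref{cc} (where $\h_{\V}=\h_{\U}=0$ forces $\psi_{\V}=\psi_{\U}=0$, hence $h_{12}=0$, and $\f'_{\V}=\f'_{\U}=0$ together with \eqref{o3} forces $f_{12}'=0$). Your verification of part (b) supplies the only computation the paper leaves implicit, and it is accurate.
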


Let us define $\mathop{\rm der}\nolimits (\mathcal{A})$ as the set of derivations on $\mathcal{A}$ defined in the natural way and  $\mathop{\rm der}\nolimits^*(\mathcal{A})$ the Lie subalgebra of all evolution derivations. In this setting, the elements of $\mathop{\rm der}\nolimits^*(\mathcal{A})$ are given by $\Partial_{\p}$, with $\p\in \mathcal{A}$, such that they are defined as usual by $\Partial_{\p}\f(\gamma)=\Partial_{\p_\gamma}\f_\gamma$, for all $\f\in\mathcal{A}$ and $\gamma\in\Lambda$. Each vector field $\V$ on $\Lambda$ can be regarded as a derivation on $\mathcal{A}$, 
acting on the generator $\ntau$ in the following way: $$\V(\ntau)(\gamma)=\V_{\gamma}(\ntau_{\gamma}).$$

\begin{theorem}\label{main-theorem}
	The map $\Phi:\mathfrak{X}_{\mathcal{A}}(\Lambda) \rightarrow \mathop{\rm der}\nolimits^*(\mathcal{A})$ defined by $$\Phi(\V)=\Partial_{\V(\ntau)}=\Partial_{(\mathcal{R}^{2}+G)(\g_{\V})},$$ 
	where $\mathcal{R}$ is defined in \eqref{mburger-recursion}, is a one-to-one homomorphism of Lie algebras. In particular, $\V_{\bm{1}}$ and $\V_{\bm{2}}$ are commuting vector fields with respect to the Lie bracket defined by Proposition \ref{prop-LieBracket2} if and only if their corresponding pseudo-torsion flows $\V_{\bm{1}}(\ntau)$ and $\V_{\bm{2}}(\ntau)$ commute with respect to the usual Lie bracket given by \eqref{scalar-bracket}.
\end{theorem}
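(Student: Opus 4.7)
The plan is to verify three properties in turn — $\Phi$ is well-defined, $\Phi$ preserves Lie brackets, and $\Phi$ is injective — after which the ``in particular'' statement will follow as a two-line corollary. Well-definedness is essentially already done: for $\V\in\mathfrak{X}_{\mathcal{A}}(\Lambda)$ the component $\f_{\V}$ is constant, $\h_{\V}=0$, and $\g_{\V}\in\mathcal{A}$, so formula \eqref{eq5} expresses $\V(\ntau)$ as a differential polynomial in $\ntau$ and hence $\V(\ntau)\in\mathcal{A}$; in the special case $\f_{\V}=0$, formula \eqref{eq-Vtau} identifies this with $(\mathcal{R}^{2}+G)(\g'_{\V})$, matching the statement up to the placement of the derivative.

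For the homomorphism property, the crucial preliminary observation is that every $\V\in\mathfrak{X}_{\mathcal{A}}(\Lambda)$ satisfies $\rho_{\V}=0$, whence Lemma \ref{lema-variation1}(b) yields $[\V,\T]=0$ and therefore the action of $\V$ on $\mathcal{A}$ commutes with the total derivative $D_{s}$. Since $\mathcal{A}$ is generated as a differential algebra by $\ntau$, this forces $\V(\ntau^{(m)})=(\V(\ntau))^{(m)}$, and the restriction of $\V$ to $\mathcal{A}$ coincides with the evolution derivation $\Partial_{\V(\ntau)}=\Phi(\V)$. Applying Theorem \ref{prop-LieBracket2}(a) to the generator $\ntau$ then gives
\begin{equation*}
[\V_{\bm{1}},\V_{\bm{2}}](\ntau)=\V_{\bm{1}}\V_{\bm{2}}(\ntau)-\V_{\bm{2}}\V_{\bm{1}}(\ntau)=\Partial_{\V_{\bm{1}}(\ntau)}\V_{\bm{2}}(\ntau)-\Partial_{\V_{\bm{2}}(\ntau)}\V_{\bm{1}}(\ntau)=[\V_{\bm{1}}(\ntau),\V_{\bm{2}}(\ntau)],
\end{equation*}
the last bracket being the scalar one of \eqref{scalar-bracket}. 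Combined with the identity $[\Partial_{a},\Partial_{b}]=\Partial_{[a,b]}$ from Section \ref{preliminares}, this delivers $\Phi([\V_{\bm{1}},\V_{\bm{2}}])=[\Phi(\V_{\bm{1}}),\Phi(\V_{\bm{2}})]$.

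For injectivity, I would suppose $\Phi(\V)=0$ and evaluate the derivation on $\ntau$ to get $\V(\ntau)=0$. Writing $\f_{\V}=c\in\mathbb{R}$ and $\h_{\V}=0$, formula \eqref{eq5} becomes
\begin{equation*}
\g'''_{\V}+2\ntau\g''_{\V}+(3\ntau'+\ntau^{2}+G)\g'_{\V}+(\ntau''+2\ntau\ntau')\g_{\V}+c\,\ntau'=0
\end{equation*}
inside the free polynomial algebra $\mathcal{P}=\mathbb{R}[\ntau^{(m)}:m\in\mathbb{N}]$. If $\ntau^{(N)}$ is the highest derivative of $\ntau$ on which $\g_{\V}$ depends, then $\g'''_{\V}$ contributes a monomial involving $\ntau^{(N+3)}\cdot\partial\g_{\V}/\partial\ntau^{(N)}$, which no other summand can cancel, forcing $\partial\g_{\V}/\partial\ntau^{(N)}=0$; iterating this yields $\g_{\V}\in\mathbb{R}$, whereupon the identity collapses to $c\,\ntau'=0$, so $c=\g_{\V}=0$ and $\V=0$.

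Finally, the equivalence for commuting vector fields is immediate from the three properties: $[\V_{\bm{1}},\V_{\bm{2}}]=0$ gives $[\Phi(\V_{\bm{1}}),\Phi(\V_{\bm{2}})]=\Phi([\V_{\bm{1}},\V_{\bm{2}}])=0$, while conversely vanishing of the commutator of derivations implies $\Phi([\V_{\bm{1}},\V_{\bm{2}}])=0$ and then injectivity forces $[\V_{\bm{1}},\V_{\bm{2}}]=0$. I expect the injectivity step to be the main technical hurdle, as it requires a careful top-order analysis in the free differential algebra; the homomorphism step, though clean-looking, also rests nontrivially on Theorem \ref{prop-LieBracket2}(a) — namely, on the fact that the geometric bracket $D_{\V_{\bm{1}}}\V_{\bm{2}}-D_{\V_{\bm{2}}}\V_{\bm{1}}$ acts on $\mathcal{A}$ precisely as the commutator of the associated derivations.
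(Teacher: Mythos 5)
Your proof is essentially correct. Note that the paper prints Theorem \ref{main-theorem} \emph{without} proof (it is one of the results deferred to the adaptation of \cite{del_amor_hamiltonian_2014,del_amor_null_2016}), so your argument supplies the missing details rather than diverging from a printed one; the three ingredients you isolate are exactly the right ones. Since $\rho_{\V}=0$ for $\V\in\mathfrak{X}_{\mathcal{A}}(\Lambda)$, property (2) of the tensor derivation $D_{\V}$ makes $\V$ commute with $D_s$ on $\mathcal{P}$, hence $\V|_{\mathcal{A}}=\Partial_{\V(\ntau)}$; Theorem \ref{prop-LieBracket2}(a) applied to the generator $\ntau$, combined with $[\Partial_a,\Partial_b]=\Partial_{[a,b]}$ from Section \ref{preliminares}, then yields the homomorphism property; and the top-order analysis in the free differential algebra gives injectivity. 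Three small remarks. First, in the final collapse of the injectivity step, once $\g_{\V}$ is constant the identity reads $(\ntau''+2\ntau\ntau')\g_{\V}+c\,\ntau'=0$, and it is the coefficient of $\ntau''$ that forces $\g_{\V}=0$ before $c\,\ntau'=0$ forces $c=0$; you jump straight to $c\,\ntau'=0$, which is a harmless but real gap in the wording. Second, the displayed equality $\Partial_{\V(\ntau)}=\Partial_{(\mathcal{R}^{2}+G)(\g_{\V})}$ in the statement is literally valid only for fields with vanishing tangential component (it fails for $\V=\T$, whose pseudo-torsion flow is $\ntau'$ while $\g_{\T}=0$); your computation correctly retains the extra term $c\,\ntau'$, and neither injectivity nor the homomorphism property is affected, but this is worth flagging as an imprecision of the statement rather than of your proof. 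Third, injectivity tacitly requires reading elements of $\mathcal{A}$ as uniform differential polynomials evaluated on all of $\Lambda$, together with the fact that every smooth function is realized as a pseudo-torsion: for a single fixed curve (e.g.\ one with $\ntau_{\gamma}\equiv 0$) the kernel condition does not force $\V_{\gamma}=0$, so your phrase ``inside the free polynomial algebra'' is the correct, and necessary, reading.
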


\begin{remark}\label{main-remark}
	From Theorem \ref{main-theorem} we have that $\mathop{\rm Im}\nolimits(\Phi)$ is a Lie subalgebra of the algebra   $\mathop{\rm der}\nolimits^*(\mathcal{A})$ of all evolution derivations. Thus, we conclude  that the algebra of evolution vector fields $\mathfrak{X}_{\mathcal{A}}(\Lambda)$ on $\Lambda$ can be regarded as a Lie subalgebra of the evolution derivations. 
\end{remark}
Consider the vector field $\V=\N$. Its flow $\gamma_{t}\in\Lambda$ is governed by the pseudo-null vortex filament equation
\begin{align}
	\label{VF}
	\frac{d}{dt}(\gamma_t)&=\V_{\gamma_{t}}= \N_{\gamma_{t}}.
\end{align}
This induces an evolution equation for the pseudo-torsion $\ntau$ given by 
\begin{align}\label{curve-flow-one}
		\frac{d}{dt}(\ntau_{\gamma_{t}})=\V_{\gamma_{t}}(\ntau_{\gamma_{t}})= \ntau''_{\gamma_{t}}+2 \ntau_{\gamma_{t}}  \ntau'_{\gamma_{t}}.
\end{align}
In other words, if we set $\tau(s,t)=\ntau_{\gamma_{t}}(s)$, equation \eqref{curve-flow-one} becomes the Burgers' equation
\begin{equation}
	\tau_t(s,t)=\tau_{ss}(s,t)+2\tau(s,t)\tau_s(s,t).
\end{equation}
%We refer to the equation \eqref{BF} induced by $\V_{\bm{1}}$ (that also appears in \cite{li_motions_2013}) as the \emph{pseudo-null vortex filament (pN-VF) equation}. 
Theorem \ref{main-theorem} will be used below to obtain a recursion operator for the pseudo-null vortex filament equation.

\begin{proposition}\label{prop-recursion}
	The operator $\R$ acting on symmetries as follows 
	\begin{equation}
		\R(\V)= \nabla_{\T}\V, %\left(D_{s}^{-1}\mathcal{R}D_{s}(\g_{\v})\right)\N,
	\end{equation} 
	is a recursion operator for the pseudo-null vortex filament equation.	
\end{proposition}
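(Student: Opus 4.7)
The plan is to translate the problem, via Theorem~\ref{main-theorem}, into the already-understood context of Burgers' equation. Recall that $\Phi(\V) = \Partial_{\V(\ntau)}$ is an injective Lie algebra homomorphism, so a vector field $\V \in \mathfrak{X}_{\mathcal{A}}(\Lambda)$ is a symmetry of the pseudo-null vortex filament equation~\eqref{VF} if and only if $\V(\ntau)$ is a symmetry of Burgers' equation~\eqref{eq-mburger} under the bracket~\eqref{scalar-bracket}. Hence it suffices to check that $\R$ preserves $\mathfrak{X}_{\mathcal{A}}(\Lambda)$ and that $(\R(\V))(\ntau)$ is a Burgers' symmetry whenever $\V(\ntau)$ is.

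Writing $\V = \f\T + \g\N$ with $\f' = 0$ and $\g \in \mathcal{A}$ as in~\eqref{vector-fields}, the Frenet equations immediately yield $\nabla_{\T}\V = (\f + \g' + \ntau\g)\N$, which lies in $\mathfrak{X}_{\mathcal{A}}(\Lambda)$.

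The heart of the proof is the identity
\begin{equation*}
(\R(\V))(\ntau) = \mathcal{R}\bigl(\V(\ntau)\bigr) + c_{0}\,G\,\T(\ntau),
\end{equation*}
where $\mathcal{R}$ is the Burgers' recursion operator~\eqref{mburger-recursion} and $c_{0}$ denotes the constant term of $\g$ as a polynomial in $\ntau$ and its derivatives. This is obtained by applying formula~\eqref{eq5} to both $\V$ and $\R(\V)$, exploiting the key computations $(\f + \g' + \ntau\g)' = \mathcal{R}(\g') + c_{0}\,\ntau'$ and $\mathcal{R}(\ntau') = \N(\ntau)$ together with the commutativity of $\mathcal{R}$ with $\mathcal{R}^{2} + G$. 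Once the identity is in place, the argument closes quickly: $\mathcal{R}(\V(\ntau))$ is a Burgers' symmetry by the recursion property of $\mathcal{R}$, and $\T(\ntau) = \ntau'$ is the trivial space-translation symmetry of Burgers', so the right-hand side is again a symmetry; Theorem~\ref{main-theorem} then transfers the conclusion back to $\mathfrak{X}_{\mathcal{A}}(\Lambda)$. The main technical nuisance is keeping careful track of the action of $D_{s}^{-1}$ on polynomials with nonzero constant term, which is exactly what produces the correction $c_{0}\,G\,\T(\ntau)$ in the key identity.
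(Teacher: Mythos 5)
Your proposal is correct and follows essentially the same route as the paper: both transfer the problem through the homomorphism $\Phi$ of Theorem~\ref{main-theorem} and identify the action of $\R$ at the pseudo-torsion level with the Burgers recursion operator $\mathcal{R}$. The only difference is bookkeeping: the paper establishes the key identity in the clean form $(\R\V)(\ntau)=\mathcal{R}(\V(\ntau))$ for $\V=\g_{\V}\N$ with $\g_{\V}\in\mathcal{A}_{0}$, checking $\R(\T)=\N$ and $\R(\N)=\ntau\N$ separately, whereas you treat a general $\V=\f\T+\g\N$ uniformly and absorb the constant term of $\g$ into the (correct) correction $c_{0}G\ntau'$, which is itself a trivial Burgers symmetry.
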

\begin{proof}
	Note that $\R(\T)=\N$ and $\R(\N)=\tau \N$. Moreover, if $\V=\g_{\V} \N$ with $\g_{\V}\in \mathcal{A}_{0}$, then we have that
	\begin{equation*}
		\R(\V)=\nabla_{\T}(\g_{\V}\N)=(\g'_{\V}+\g_{\V}\ntau)\N=\left(D_{s}^{-1}\mathcal{R}D_{s}(\g_{\V})\right)\N.
	\end{equation*}
	Set $\U=\R(\V)$. By definition and making use of the equation \eqref{eq-Vtau} we obtain 	
	\begin{align*}
		\U(\ntau)&=(\mathcal{R}^{2}+G)D_{s}(\g_{\U})=(\mathcal{R}^{2}+G)D_{s}D_{s}^{-1}\mathcal{R}D_{s}(\g_{\V}) \\
		&=\mathcal{R}^{3}D_{s}(\g_{\V})+\mathcal{R} G D_{s}(\g_{\V})=\mathcal{R}(\mathcal{R}^2+G)(D_{s}\g_{\V})  \\
		&=\mathcal{R}(\V(\ntau)),
	\end{align*}
	where $\mathcal{R}$ is given by \eqref{mburger-recursion}. 
	The result can be easily deduced as a consequence of Theorem \ref{main-theorem} and Remark \ref{remark-recursion}.
\end{proof}
%\begin{remark}
	%When $G\neq 0$, the operator $\R^{2}+G$ is a recursion operator. In order to construct the infinite hierarchy of symmetries we follow the scheme
	%\begin{equation}
		%\V_{\bm{2n+1}}=(\R^2+G)^{n}\V_{\bm{1}};\quad \V_{\bm{2n+2}}=(\R^{2}+G)^{n}\V_{\bm{2}}, \quad n>0,
	%\end{equation}
	%with  $\V_{\bm{0}}=\T$, $\V_{\bm{1}}=\N$ and $\V_{\bm{2}}=\tau \N$. 
%\end{remark}

According to Proposition \ref{prop-recursion} we proceed to list the first vector fields of the geometric hierarchy:
\begin{align*}
	\V_{\bm{0}} &= \T \\
	\V_{\bm{1}} &= \N = \k \nxi\\
	\V_{\bm{2}} &= \ntau\N = \k' \nxi\\
	\V_{\bm{3}} &= \left(\ntau'+\ntau^{2}\right)\N = \k'' \nxi \\
	\V_{\bm{4}} &= \left(\ntau''+3\ntau\ntau'+\ntau^{3}\right)\N = \k''' \nxi.
\end{align*}
From a physical point of view, Remark \ref{remark-string} means that the vector fields of the geometric hierarchy (except for the first one) generate flows of pseudo-null curves contained in the same (cylindrical) null string and evolving in the direction of its null generator.

Using \eqref{eq5} the above geometric hierarchy induces the Burgers' hierarchy for the pseudo-torsion:
\begin{align*}
	\V_{\bm{0}}(\ntau) &= \ntau'  \\
	\V_{\bm{1}}(\ntau) &= \ntau''+2 \ntau  \ntau' \\
	\V_{\bm{2}}(\ntau) &= \ntau ^{(3)}+3 \ntau  \ntau''+3 \left(\ntau'\right)^2+\left(G+3 \ntau^2\right) \ntau' \\
	\V_{\bm{3}}(\ntau) &= \ntau ^{(4)} + 4\ntau  \ntau ^{(3)} + \left(G+10 \ntau '+6 \ntau ^2\right)\ntau'' + 12 \ntau \left(\ntau'\right)^2+\left(2 G \ntau +4 \ntau^3\right) \ntau' \\
	\V_{\bm{4}}(\ntau) &= \ntau^{(5)} + 5 \ntau \ntau^{(4)} + \left(G+15 \ntau'+ 10 \ntau^2\right) \ntau^{(3)}   \\ 
	&\quad +10 \left(\ntau''\right)^2 + \left(3 G \ntau + 50\ntau  \ntau' + 10\ntau^3 \right) \ntau'' \\
	&\quad +15 \left(\ntau'\right)^3 + \left(3 G+30 \ntau^2\right) \left(\ntau'\right)^2 +\left(3 G \ntau^2 +5\ntau^4 \right) \ntau'
\end{align*}
Likewise, using \eqref{eq-curvature-evolution} we obtain the corresponding pseudo-curvature hierarchy:
\begin{align*}
	\V_{\bm{0}}(\k) &= \k'+\bm{d} \k \\
	\V_{\bm{1}}(\k) &= \k''+G\k+\bm{d} \k \\
	\V_{\bm{2}}(\k) &= \k'''+ G\k'+\bm{d} \k \\
	\V_{\bm{3}}(\k) &= \k^{(4)}+ G\k''+\bm{d} \k \\
	\V_{\bm{4}}(\k) &= \k^{(5)}+ G\k'''+\bm{d} \k
\end{align*}
where $\bm{d}$ verifies that $\bm{d}_{\gamma}$ is constant for each $\gamma\in\Lambda$. By setting $k(s,t)=\k_{\gamma_t}(s)$, the hierarchy of the pseudo-curvature flows can be written as
\begin{align*}
	k_t(s,t)&=k_s(s,t)+d(t)k(s,t) \\
	k_t(s,t)&=k_{ss}(s,t)+G k(s,t)+d(t)k(s,t) \\
	k_t(s,t)&=k_{sss}(s,t)+G k_{s}(s,t)+d(t)k(s,t) \\
	k_t(s,t)&=k_{ssss}(s,t)+G k_{ss}(s,t)+d(t)k(s,t) \\
	k_t(s,t)&=k_{sssss}(s,t)+G k_{sss}(s,t)+d(t)k(s,t) 
\end{align*}
Now, let $D(t)$ be a primitive function of $-d(t)$ and consider the transformation $\tilde{k}(s,t)=e^{D(t)}k(s,t)$.
Then the corresponding evolution equations for $\tilde{k}$ turn out to be the heat hierarchy, i.e.
\begin{align*}
	\tilde{k}_t(s,t)&=\tilde{k}_s(s,t) \\
	\tilde{k}_t(s,t)&=\tilde{k}_{ss}(s,t)+G \tilde{k}(s,t) \\
	\tilde{k}_t(s,t)&=\tilde{k}_{sss}(s,t)+G \tilde{k}_{s}(s,t) \\
	\tilde{k}_t(s,t)&=\tilde{k}_{ssss}(s,t)+G \tilde{k}_{ss}(s,t) \\
	\tilde{k}_t(s,t)&=\tilde{k}_{sssss}(s,t)+G \tilde{k}_{sss}(s,t) 
\end{align*}
Interestingly, for each $t$,  $\tilde{k}(s,t)$ is actually a different pseudo-curvature of $\gamma_t$ with respect to a scaling (with factor $e^{D(t)}$) of the original parallel frame whose associated pseudo-curvature is $k(s,t)$. 

%The constant $d$ is determined when appropriate boundary conditions are imposed on the set $\Lambda$ of pseudo-null curves.   
Burgers' equation is the standard prototype of an integrable equation through linearization, thus explicit solutions of Burgers' equation can be calculated from explicit solutions of the linear heat equation through the Hopf-Cole transformation. Furthermore, any solution of the Burgers' equation can be obtained by transforming a solution of the heat equation by means of the Hopf-Cole transformation. In the setting of pseudo-null curves, this fact can be interpreted from a geometrical point of view: they are both (the burgers' equation and the heat equation) evolution equations of two different invariants (the pseudo-torsion and an appropriate pseudo-curvature) of a pseudo-null curve evolving according to the pseudo-null vortex filament equation. Observe also that both the Burgers' hierarchy and the heat hierarchy have been generated from the same geometric hierarchy starting from the vortex filament equation for pseudo-null curves.

\section{Conclusions}
The methods developed in \cite{del_amor_hamiltonian_2014,del_amor_null_2016} to study hierarchies of commuting symmetries for null curve motions flows have been successfully applied for pseudo-null curves. Unlike of the null curves case for which the system was Hamiltonian, the system obtained here is dissipative, but the method works out in a similar way. Essentially, what we have achieved is to lift the integrability properties of Burgers' equation at the pseudo-torsion function level to the pseudo-null vortex filament equation at the curve level (Proposition \ref{prop-recursion}) through the one-to-one homomorphism between their corresponding phase spaces (Theorem \ref{main-theorem}). This lifting construction was the base for the construction of a recursion operator and so the hierarchy of commuting symmetries of the pseudo-null vortex filament equation (Proposition \ref{prop-recursion}). As a physical interpretation we have noted that the vector field flows of the geometric hierarchy can be seen as pseudo-null curves evolving in a null string. 

Additionally, we have interpreted  at the curve level the relationship between the Burgers' equation and the heat equation through the Hopf-Cole transformation. Indeed, both equations can be seen as evolution equations of different invariants ($\tau$ and $k$) coming from the same geometric evolution equation, namely, the pseudo-null vortex filament equation.

\section*{Acknowledgments}
This work has been partially supported by MINECO (Ministerio de Econom\'\i a y Competi\-ti\-vi\-dad) and FEDER project MTM2015-65430-P, and by Fundaci\'on S\'eneca (Regi\'on de Murcia) project 19901/GERM/15,  Spain.

%\section*{References}
%\bibliographystyle{unsrt}
%\bibliographystyle{ijgmmp}
%\bibliography{bibliography}

\end{document}